\pdfoutput=1
\documentclass[conference]{IEEEtran}
\makeatletter
\def\ps@headings{%
\def\@oddhead{\mbox{}\scriptsize\rightmark \hfil \thepage}%
\def\@evenhead{\scriptsize\thepage \hfil \leftmark\mbox{}}%
\def\@oddfoot{}%
\def\@evenfoot{}}
\makeatother
\pagestyle{headings}
\usepackage[pdftex]{graphicx}
\usepackage{amsmath}
\usepackage{amssymb}
\usepackage{array}
\usepackage{mdwmath}
\usepackage{subfigure}
\usepackage{hyperref}
\usepackage{algorithm}
\usepackage{algorithmic}
\usepackage{epsfig}
\usepackage{epstopdf}

\graphicspath{{../pdf/}{../jpeg/}}
\DeclareGraphicsExtensions{.pdf,.jpeg,.png}

\newtheorem{theorem}{Theorem}
\newtheorem{lemma}{Lemma}

\newcommand{\beqa}{\begin{eqnarray}}
\newcommand{\eeqa}{\end{eqnarray}}

\floatname{algorithm}{Algorithm}

\begin{document}
\title{Optimal Sequential Wireless Relay Placement \\
on a Random Lattice Path}

\author{
\IEEEauthorblockN{Abhishek Sinha\IEEEauthorrefmark{1}, Arpan Chattopadhyay\IEEEauthorrefmark{1}, K.~P.~Naveen\IEEEauthorrefmark{1}, 
Marceau Coupechoux\IEEEauthorrefmark{2}  and Anurag Kumar\IEEEauthorrefmark{1}}
\IEEEauthorblockA{\IEEEauthorrefmark{1}Dept. of Electrical Communication Engineering,
Indian Institute of Science, Bangalore 560012, India.\\
Email: \{abhishek.sinha.iisc, arpanc.ju\}@gmail.com, \{naveenkp, anurag\}@ece.iisc.ernet.in}
\IEEEauthorblockA{\IEEEauthorrefmark{2}Telecom ParisTech and CNRS LTCI, 
Dept.\ of Informatique et R\'eseaux, 23, avenue d'Italie, 75013 Paris, France.\\
Email: marceau.coupechoux@telecom-paristech.fr}
}
\maketitle

\begin{abstract}
  Our work is motivated by the need for impromptu (or ``as-you-go'')
  deployment of relay nodes (for establishing a packet communication path
  with a control centre) by firemen/commandos while operating in an
  unknown environment. We consider a model, where a deployment
  operative steps along a random lattice path whose evolution is
  Markov. At each step, the path can randomly either continue in the
  same direction or take a turn ``North'' or ``East,'' or come to an
  end, at which point a data source (e.g., a temperature sensor) has
  to be placed that will send packets to a control centre at the
  origin of the path. A decision has to be made at each step whether
  or not to place a wireless relay node. Assuming that the packet
  generation rate by the source is very low, and simple link-by-link
  scheduling, we consider the problem of relay placement so as to
  minimize the expectation of an end-to-end cost metric (a linear
  combination of the sum of convex hop costs and the number of relays
  placed). This impromptu relay placement problem is formulated as a
  total cost Markov decision process.  First, we derive the optimal
  policy in terms of an optimal placement set and show that this set
  is characterized by a boundary beyond which it is optimal to
  place. Next, based on a simpler alternative one-step-look-ahead
  characterization of the optimal policy, we propose an algorithm
  which is proved to converge to the optimal placement set in a finite
  number of steps and which is faster than the traditional value
  iteration. We show by simulations that the distance based heuristic,
  usually assumed in the literature, is close to the optimal provided
  that the threshold distance is carefully chosen.
\end{abstract}


\begin{keywords}
Relay placement, Sensor networks, Markov decision processes, One-step-look-ahead.
\end{keywords}

\section{Introduction} \label{intro} Wireless networks, such as
cellular networks or multihop ad hoc networks, would normally be
deployed via a planning and design process. There are situations,
however, that require the impromptu (or ``as-you-go'') deployment of a
multihop wireless packet network. For example, such an impromptu
approach would be required to deploy a wireless sensor network for
situational awareness in emergency situations such as those faced by
firemen or commandos (see
\cite{Fischer,howard-etal02incremental-self-deployment-algorithm}). For
example, as they attack a fire in a building, firemen might wish to
place temperature sensors on fire-doors to monitor the spread of fire,
and ensure a route for their own retreat; or commandos attempting to
flush out terrorists might wish to place acoustic or passive infra-red
sensors to monitor the movement of people in the building. As-you-go
deployment may also be of interest when deploying a multi-hop wireless
sensor network over a large terrain (such as a dense forest) in order
to obtain a first-cut deployment which could then be augmented to a
network with desired properties (connectivity and quality-of-service).

With the above larger motivation in mind, in this paper we are
concerned with the rigorous formulation and solution of a problem of
impromptu deployment of a multihop wireless network along a random
lattice path, see Fig.~\ref{impromptu_figure}. The path could
represent the corridor of a large building, or even a trail in a
forest. The objective is to create a multihop wireless path for packet
communication from the end of the path to its beginning. The problem
is formulated as an optimal sequential decision problem. The
formulation gives rise to a total cost Markov decision process, which
we study in detail in order to derive structural properties of the
optimal policy. We also provide an efficient algorithm for calculating
the optimal policy.

\begin{figure}[t]
\centering
\includegraphics[scale=0.45,angle=0]{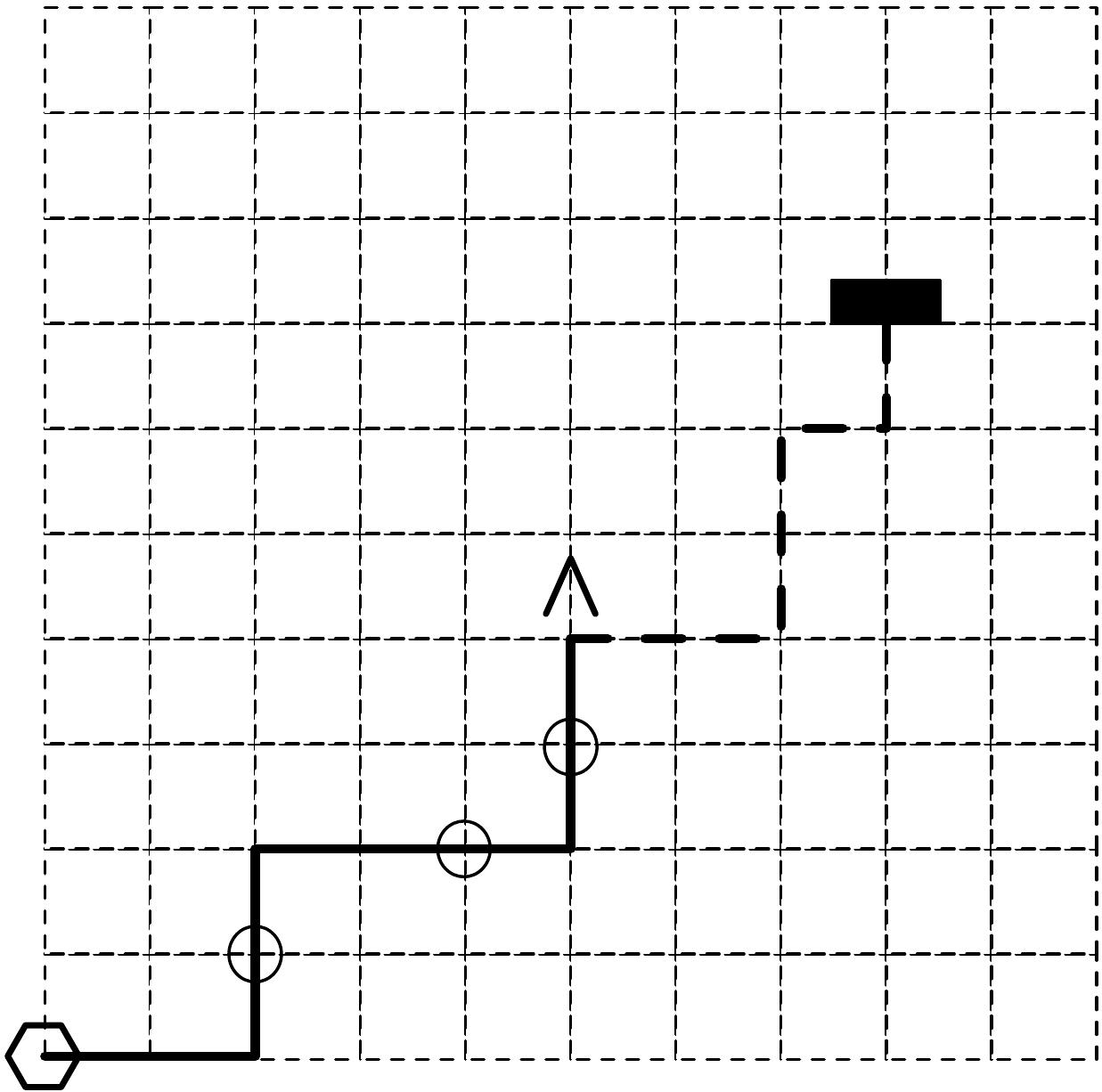}
\caption{\label{impromptu_figure} A wireless network being deployed as
  a person steps along a random lattice path. Inverted \textbf{V}:
  location of the deployment person; solid line: path already covered;
  circles: deployed relays; thick dashed path: a possible evolution of
  the remaining path. The sensor to be placed at the end is
  also shown as the black rectangle.}
\vspace{-6mm}
\end{figure}

\subsection{Related Work}
Our study is motivated by ``first responder'' networks, a concept that
has been around at least since 2001.  In
\cite{howard-etal02incremental-self-deployment-algorithm}, Howard et
al.\ provide heuristic algorithms for the problem of incremental
deployment of sensors (such as surveillance cameras) with the
objective of covering the deployment area. Their problem is related to
that of self-deployment of autonomous robot teams and to the
art-gallery problem.  Creation of a communication network that is
optimal in some sense is not an objective in
\cite{howard-etal02incremental-self-deployment-algorithm}. In a
somewhat similar vein, the work of Loukas et al.\
\cite{mobihoc.loukas-etal08robotic-wireless-network-emergency} is
concerned with the dynamic locationing of robots that, in an emergency
situation, can serve as wireless relays between the infrastructure and
human-carried wireless devices.  The problem of impromptu deployment
of static wireless networks has been considered in
\cite{mobihoc.naudts-etal07monitoring-planning-tool,
  mobihoc.souryal-etal07real-time-deployment-range-extension,
  mobihoc.souryal-etal09rapidly-deployable-mesh-network-testbed,
  mobihoc.aurisch-tlle09relay-placement-emergency-response}.  In
\cite{mobihoc.naudts-etal07monitoring-planning-tool}, Naudts et al.\
provide a methodology in which, after a node is deployed, the next
node to be deployed is turned on and begins to measure the signal
strength to the last deployed node.  When the signal strength drops
below a predetermined level, the next node is deployed and so
on. Souryal et al.\ provide a similar approach in
\cite{mobihoc.souryal-etal07real-time-deployment-range-extension,
  mobihoc.souryal-etal09rapidly-deployable-mesh-network-testbed},
where an extensive study of indoor RF link quality variation is
provided, and a system is developed and demonstrated.  The work
reported in
\cite{mobihoc.aurisch-tlle09relay-placement-emergency-response} is yet
another example of the same approach for relay deployment. More
recently, Liu et al.\ \cite{Breadcrumb} describe a ``breadcrumbs''
system for aiding firefighters inside buildings, and is similar to our
present paper in terms of the class of problems it addresses.  In a
survey article \cite{Fischer}, Fischer et al.\ describe various
localization technologies for assisting emergency responders, thus
further motivating the class of problems we consider.

In our earlier work (Mondal et al.\
\cite{mondal-etal12impromptu-deployment_NCC}) we took the first steps
towards rigorously formulating and addressing the problem of impromptu
optimal deployment of a multihop wireless network on a line.  The line
is of unknown length but prior information is available about its
probability distribution; at each step, the line can come to an end
with probability $p$, at which point a sensor has to be placed. Once
placed, the sensor sends periodic measurement packets to a control
centre near the start of the line.  It is assumed that the measurement
rate at the sensor is low, so that (with a very high probability) a
packet is delivered to the control centre before the next packet is
generated at the sensor. This so called ``lone packet model'' is
realistic for situations in which the sensor makes a measurement every
few seconds. 

The objective of the sequential decision problem is to minimise a
certain expected per packet cost (e.g., end-to-end delay or total
energy expended by a node), which can be expressed as the sum of the
costs over each hop, subject to a constraint on the number of relays
used for the operation. It has been proved in
\cite{mondal-etal12impromptu-deployment_NCC} that an optimal placement
policy solving the above mentioned problem is a threshold rule, i.e.,
there is a threshold $r^*$ such that, after placing a relay, if the
operative has walked $r^*$ steps without the path ending, then a relay
must be placed at $r^*$.

\subsection{Outline and Our Contributions}
In this paper, while continuing to assume (\textbf{a}) that a single
operative moves step-by-step along a path, deciding to place or to not
place a relay, (\textbf{b}) that the length of the path is a
geometrically distributed random multiple of the step size,
(\textbf{c}) that a source of packets is placed at the end of the
path, (\textbf{d}) that the lone packet traffic model applies, and
(\textbf{e}) that the total cost of a deployment is a linear
combination of the sum of convex hop costs and the number of nodes
placed, we extend the work presented in
\cite{mondal-etal12impromptu-deployment_NCC} to the two-dimensional
case. At each step, the line can take a right angle turn either to the
``East'' or to the ``North'' with known fixed probabilities. We assume
a Non-Line-Of-Sight (NLOS) propagation model, where a radio link
exists between two nodes placed anywhere on the path, see
Fig.~\ref{lattice-path-figure}.  The lone packet model is a natural
first assumption, and would be useful in low-duty cycle monitoring
applications.  Once the network has been deployed, an analytical
technique such as that presented in
\cite{rachit-kumar12performance-analysis} can be used to estimate the
actual packet carrying capacity of the network.

We will formally describe our system model and problem formulation in
Section~\ref{system_model_section}.  The following are our main
contributions:

\begin{figure}[t]
\centering
\includegraphics[width=60mm,height=40mm]{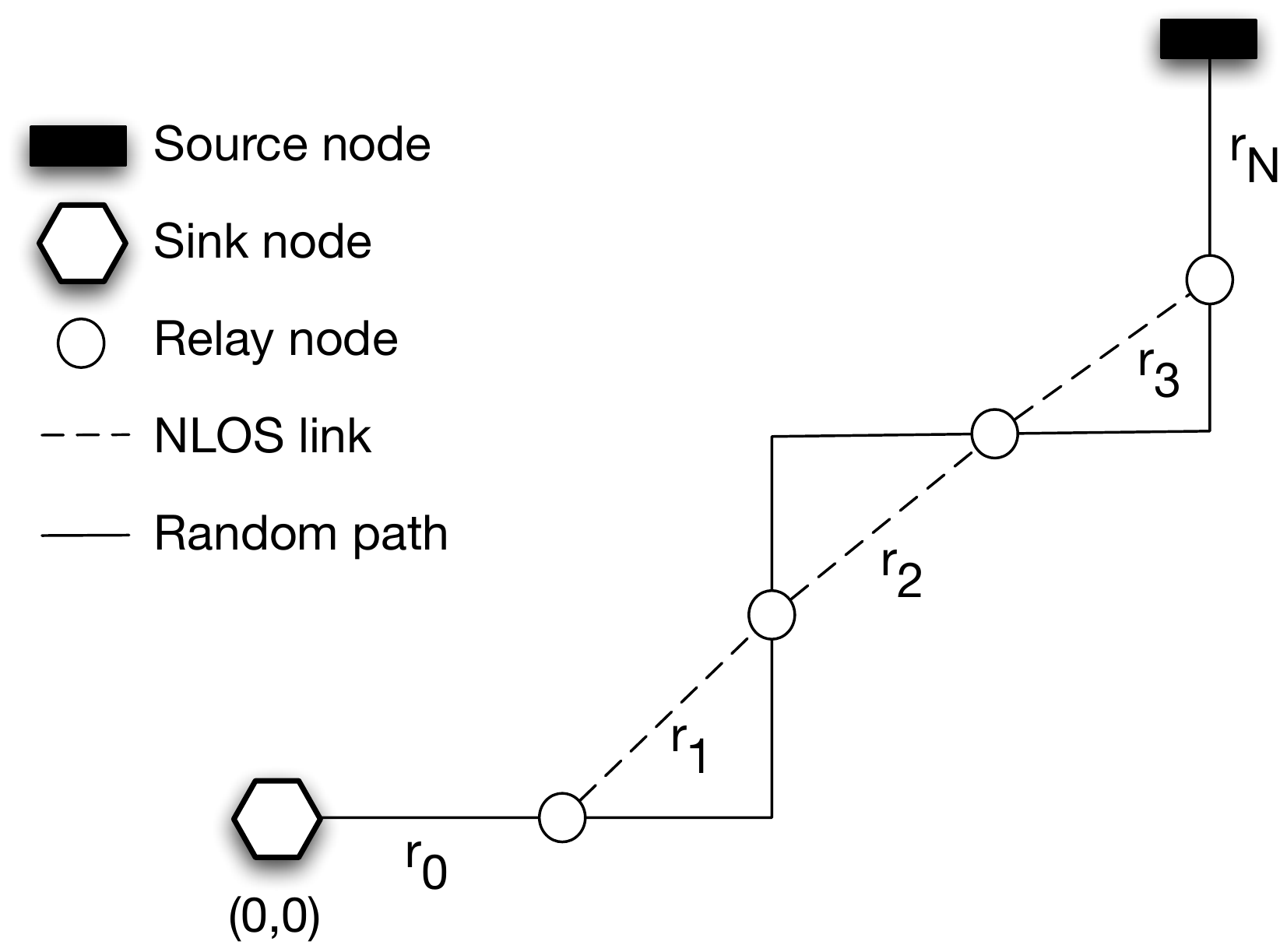}
\caption{\label{lattice-path-figure} A depiction of relay deployment 
along a random lattice path with NLOS propagation.}
\vspace{-6mm}
\end{figure}

\begin{itemize}
\item We formulate the problem as a total cost Markov decision process
  (MDP), and characterize the optimal policies in terms of placement
  sets. We show that these optimal policies are threshold policies and
  thus the placement sets are characterized by boundaries in the
  two-dimensional lattice (Section~\ref{nlos_section}). Beyond these
  boundaries, it is optimal to place a relay.
\item Noticing that placement instants are renewal points in the
  random process, we recognize and prove the One-Step-Look-Ahead
  (OSLA) characterization of the placement sets
  (Section~\ref{OSLA_formulation_section}).
\item Based on the OSLA characterization, we propose an iterative
  algorithm, which converges to the optimal placement set in a finite
  number of steps (Section~\ref{FPI_NLOS_section}). We have observed
  that this algorithm converges much faster than value iteration.
\item In Section~\ref{numerical_work_section} we provide several
  numerical results that illustrate the theoretical development. The
  relay placement approach proposed in
  \cite{mobihoc.naudts-etal07monitoring-planning-tool,
    mobihoc.souryal-etal07real-time-deployment-range-extension,
    mobihoc.souryal-etal09rapidly-deployable-mesh-network-testbed,
    mobihoc.aurisch-tlle09relay-placement-emergency-response} would
  suggest a distance threshold based placement rule. We numerically
  obtain the optimal rule in this class, and find that the cost of
  this policy is numerically indistinguishable from that of the
  overall optimal policy provided by our theoretical development. It
  suggests that it might suffice to utilize a distance threshold
  policy. However, the distance threshold should be carefully designed
  taking into account the system parameters and the optimality
  objective.
\end{itemize}
For the ease of presentation we have moved most of the proofs
to the Appendix. 

\section{System Model} \label{system_model_section} We consider a
deployment person, whose stride length is 1 unit, moving along a
random path in the two-dimensional lattice, placing relays at some of
the lattice points of the path and finally a source node at the end of
the path. Once placed, the source node periodically generates
measurement packets which are forwarded by the successive relays in a
multihop fashion to the control centre located at $(0,0)$; see
Fig.~\ref{lattice-path-figure}.

\subsection{Random Path}

Let $\mathbb{Z}_+$ denote the set of nonnegative integers, and
$\mathbb{Z}_+^2$ the nonnegative orthant of the two dimensional
integer lattice. We will refer to the $\mathsf{x}$ direction as East
and to the $\mathsf{y}$ direction as North. Starting from $(0,0)$
there is a lattice path that takes random turns to the North or to the
East (this is to avoid the path folding back onto itself, see
Fig~\ref{lattice-path-figure}). Under this restriction, the path
evolves as a stochastic process over $\mathbb{Z}_+^2$. When the
deployment person has reached some lattice point, the path continues
for one more step and terminates with probability $p$, or does not
terminate with probability $1-p$. In either case, the next step is
Eastward with probability $q$ and Northward with probability $1-q$.
Thus, for instance, $(1-p)q$ is the probability that the path proceeds
Eastwards without ending. The person deploying the relays is assumed
to keep a count of $m$ and $n$, the number of steps taken in the
$\mathsf{x}$ direction and in $\mathsf{y}$ direction, repectively,
since the previous relay was placed. He is also assumed to know the
probabilities $p$ and $q$.

\subsection{Cost Definition}

In our model, we assume NLOS propagation, i.e., packet transmission can take place between any two 
successive relays even if they are not on the same straight line segment of the lattice path. In the 
building context, this would correspond to the walls being radio transparent. The model is also suitable 
when the deployment region is a thickly wooded forest where the deployment person is restricted to move 
only along some narrow path (lattice edges in our model). 

For two successive relays separated by a distance $r$, we assign a cost of $d(r)$ 
which could be the average delay incurred over that hop (including transmission overheads and 
retransmission delays), or the power required to get a packet across the hop.
For instance, in our numerical work we use the power cost, 
$d(r)=P_m+\gamma r^\eta$, where $P_m$ is the minimum power required, $\gamma$ 
represents an SNR constraint and $\eta$ is the path-loss exponent. Now suppose 
$N$ relays are placed such that the successive 
inter-relay distances are $r_0,r_1,\cdots,r_N$ ($r_0$ is the distance from the 
control centre at $(0,0)$ and the first relay, and $r_N$ is the distance from 
the last relay to the sensor placed at the end of the path) then the total cost 
of this placement is the sum of the one-hop costs $C=\sum_{i=0}^N d(r_i)$.
The total cost being the sum of one-hop costs can be justified for the lone packet 
model since when a packet is being forwarded there is no other packet
transmission taking place. 

We now impose a few technical conditions on the one-hop cost function
$d(\cdot)$: (\textbf{C1}) $d(0)>0$, (\textbf{C2}) $d(r)$ is convex and
increasing in $r$, and (\textbf{C3}) for any $r$ and $\delta>0$ the
difference $d(r+\delta)-d(r)$ increases to $\infty$.

(\textbf{C1}) is imposed considering the fact that it requires a
non-zero amount of delay or power for transmitting a packet between
two nodes, however close they may be. (\textbf{C2}) and (\textbf{C3})
are properties we require to establish our results on the optimal
policies. They are satisfied by the power cost, $P_m+\gamma r^\eta$ ,
and also by the mean hop delay (see \cite{Prasenjit}).

We will overload the notation $d(\cdot)$ by denoting the one-hop cost between 
the locations $(0,0)$ and $(x,y)\in\Re^2$ as simply $d(x,y)$ instead 
of $d(||(x,y)-(0,0)||)$. Using the condition on $d(r)$ we prove the following 
convexity result of $d(x,y)$.
\begin{lemma} \label{conv} The function $d(x,y)$ is convex in $(x,y)$,
  where $(x,y)\in\mathbb{R}^2$.
\end{lemma}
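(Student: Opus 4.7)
My plan is to derive Lemma~\ref{conv} as a standard instance of the composition rule ``convex non\-decreasing $\circ$ convex $=$ convex.'' Concretely, write $d(x,y) = d(g(x,y))$ where $g(x,y) := \sqrt{x^2+y^2} = \|(x,y)\|_2$ is the Euclidean norm on $\mathbb{R}^2$. The norm $g$ is a convex function on $\mathbb{R}^2$, since for any $(x_1,y_1),(x_2,y_2)\in\mathbb{R}^2$ and $\lambda\in[0,1]$ the triangle inequality together with positive homogeneity gives
\[
g\bigl(\lambda(x_1,y_1)+(1-\lambda)(x_2,y_2)\bigr) \le \lambda\,g(x_1,y_1)+(1-\lambda)\,g(x_2,y_2).
\]

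Next I would invoke hypotheses (\textbf{C2}) on $d$: it is convex and increasing on $\mathbb{R}_+$. Combining the two convexity facts, I would write out the single chain of inequalities
\[
d\bigl(g(\lambda(x_1,y_1)+(1-\lambda)(x_2,y_2))\bigr) \le d\bigl(\lambda g(x_1,y_1)+(1-\lambda)g(x_2,y_2)\bigr) \le \lambda d(g(x_1,y_1))+(1-\lambda)d(g(x_2,y_2)),
\]
where the first inequality uses convexity of $g$ together with monotonicity of $d$ (so that applying $d$ preserves the inequality), and the second uses convexity of $d$ on $\mathbb{R}_+$. The left- and right-hand sides are exactly $d(\lambda(x_1,y_1)+(1-\lambda)(x_2,y_2))$ and $\lambda d(x_1,y_1)+(1-\lambda)d(x_2,y_2)$ respectively under the overloaded notation, which is precisely the definition of convexity of $d(x,y)$ on $\mathbb{R}^2$.

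There is essentially no real obstacle here; the argument is entirely a two-line composition estimate. The only small point worth noting is that monotonicity of $d$ (not merely convexity) is crucial for the first inequality, because $g$ only controls the argument of $d$ from above. Since monotonicity is part of (\textbf{C2}), the argument goes through directly, and condition (\textbf{C3}) is not needed for this lemma.
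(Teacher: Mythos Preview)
Your proof is correct and matches the paper's approach exactly: the paper also writes $d(x,y)=d(\sqrt{x^2+y^2})$, observes that any norm is convex, and then invokes the composition rule (convex non-decreasing $\circ$ convex $=$ convex), citing \cite[Section~3.2.4]{Boyd}. Your version simply unpacks the composition rule into the explicit chain of inequalities, which is a welcome elaboration.
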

\begin{proof}
  This follows from the fact that $d(\cdot)$ is convex, non-decreasing
  in its argument. For a formal proof, see
  Appendix~\ref{conv_appendix}.
\end{proof}
We further impose the following condition on $d(x,y)$ where
$(x,y)\in\Re^2$.  We allow a general cost-function $d(x,y)$ endowed
with the following property: (\textbf{C4}) The function $d(x,y)$ is
positive, twice continuously partially differentiable in variables $x$
and $y$ and $\forall x,y\in\mathbb{R}_+$,
\begin{eqnarray} \label{assumption}
d_{xx}(x,y)>0,\hspace{5pt}d_{xy}(x,y)>0,\hspace{5pt}d_{yy}(x,y)>0,
\end{eqnarray}
where $d_{xy}(x,y)=\frac{\partial^2d(x,y)}{\partial x \partial
  y}$. These properties also hold for the mean delay and the power
functions mentioned earlier.

Finally define, for $(m,n)\in\mathbb{Z}_+^2$, $\Delta_1(m,n)=d(m+1,n)-d(m,n)$ and 
$\Delta_2(m,n)=d(m,n+1)-d(m,n)$. 
\begin{lemma}
\label{cor1}
$\Delta_1(m,n)$ and $\Delta_2(m,n)$ are 
non-decreasing in both the coordinates $m$ and $n$. 
\end{lemma}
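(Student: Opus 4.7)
The plan is to reduce each of the four required monotonicity statements (namely $\Delta_1(m{+}1,n)\ge\Delta_1(m,n)$, $\Delta_1(m,n{+}1)\ge\Delta_1(m,n)$, and the two analogous statements for $\Delta_2$) to the sign of an appropriate second partial derivative of $d(\cdot,\cdot)$, and then invoke condition (\textbf{C4}). Since (\textbf{C4}) guarantees that $d$ is $C^2$ on $\mathbb{R}_+^2$ with all second partials strictly positive, the discrete differences $\Delta_i$ are just integrals of first partials, and differences of those are double integrals of second partials.

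Concretely, I would first rewrite
\[
\Delta_1(m,n) \;=\; d(m+1,n) - d(m,n) \;=\; \int_m^{m+1} d_x(t,n)\,dt,
\]
using the fundamental theorem of calculus, which is justified by the $C^2$ regularity from (\textbf{C4}). Then, to establish monotonicity of $\Delta_1$ in $m$, I would compute
\[
\Delta_1(m+1,n) - \Delta_1(m,n) \;=\; \int_m^{m+1}\!\bigl[d_x(t+1,n) - d_x(t,n)\bigr]\,dt \;=\; \int_m^{m+1}\!\int_t^{t+1} d_{xx}(s,n)\,ds\,dt,
\]
which is strictly positive by $d_{xx}>0$. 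Analogously, for monotonicity of $\Delta_1$ in $n$, I would write
\[
\Delta_1(m,n+1) - \Delta_1(m,n) \;=\; \int_m^{m+1}\!\int_n^{n+1} d_{xy}(t,s)\,ds\,dt \;>\; 0,
\]
using $d_{xy}>0$. The same argument with the roles of $x$ and $y$ exchanged handles $\Delta_2$: monotonicity in $n$ follows from $d_{yy}>0$, and monotonicity in $m$ follows again from $d_{xy}>0$ (note that the mixed partial assumption $d_{xy}>0$ is exactly the supermodularity condition that couples the two directions).

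There is no real obstacle in this argument — the work is essentially bookkeeping — but the one point worth being careful about is that (\textbf{C4}) is stated for $(x,y)\in\mathbb{R}_+^2$, so the integrations above take place in the nonnegative orthant, which is fine because $m,n\in\mathbb{Z}_+$. I would close by remarking that this lemma is the discrete analogue of saying that each partial derivative of $d$ is non-decreasing in each coordinate, which is precisely what (\textbf{C4}) asserts in continuous form.
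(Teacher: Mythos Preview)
Your argument is correct and follows essentially the same route as the paper: both reduce the monotonicity of $\Delta_1$ and $\Delta_2$ to the positivity of the second partials $d_{xx}$, $d_{xy}$, $d_{yy}$ guaranteed by (\textbf{C4}). The only cosmetic difference is that the paper treats $\Delta_1$ as a function of real $(x,y)$, differentiates it, and applies Lagrange's Mean Value Theorem to obtain, e.g., $\partial_x\Delta_1(x,y)=\delta\,d_{xx}(\zeta,y)>0$, whereas you compute the discrete second differences directly as double integrals of the second partials; the two are equivalent via the fundamental theorem of calculus.
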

\begin{proof}
  This follows directly from (\ref{assumption}). See
  Appendix~\ref{cor1_appendix} for details.
\end{proof}

\subsection{Deployment Policies and Problem Formulation}
A deployment policy $\pi$ is a sequence of mappings $(\mu_k:k\ge0)$,
where at the $k$-th step of the path (provided that the path has not
ended thus far) $\mu_k$ allows the deployment person to decide whether
to \emph{place} or \emph{not to place} a relay where, in general,
randomization over these two actions is allowed. The decision is based
on the entire information available to the deployment person at the
$k$-th step, namely the set of vertices traced by the path and the
location of the previous vertices where relays were placed.  Let $\Pi$
represent the set of all policies. For a given policy $\pi \in \Pi$,
let $\mathbb{E}_\pi$ represent the expectation operator under policy
$\pi$. Let $C$ denote the total cost incurred and $N$ the total number
of relays used. We are interested in solving the following problem,
\begin{eqnarray}
\label{eq:modified} 
\min_{\pi \in \Pi} & \mathbb{E}_{\pi}C+\lambda \mathbb{E}_{\pi}N,
\end{eqnarray}
where $\lambda>0$ may be interpreted as the cost of a relay.  Solving
the problem in (\ref{eq:modified}) can also help us solve the
following constrained problem,
\begin{eqnarray}
\label{eq:main}
\min_{\pi \in \Pi} & \mathbb{E}_{\pi}C \nonumber \\
\mbox{Subject to:} & \mathbb{E}_{\pi}N\leq \rho_{avg},
\end{eqnarray}
where $\rho_{avg}>0$ is a contraint on the average number of relays
(we will describe this procedure in Section~\ref{EN_lambda}).
First, in Sections \ref{nlos_section} to \ref{FPI_NLOS_section},
we work towards obtaining an efficient solution to the problem in (\ref{eq:modified}).



\section{MDP Formulation and Solution} 
\label{nlos_section}
In this section we formulate the problem in (\ref{eq:modified}) as a
total cost infinite horizon MDP and derive the optimal policy in terms
of optimal placement set. We show that this set is characterized by a
two-dimensional boundary, upon crossing which it is optimal to place
a relay.

\subsection{States, Actions, State-Transitions and Cost Structure} 
\label{NLOS_MDP}
We formulate the problem as a sequential decision process starting at
the origin of the lattice path. The decision to place or not place a
relay at the $k$-th step is based on $((M_k,N_k),Z_k)$, where
$(M_k,N_k)$ denotes the coordinates of the deployment person with
respect to the previous relay and $Z_k \in \{\mathsf{e},\mathsf{c}\}$;
$Z_k = \mathsf{e}$ means that at step~$k$ the random lattice path has
ended and $Z_k = \mathsf{c}$ means that the path will continue in the
same direction for at least one more step. Thus, the state space is
given by:
\begin{eqnarray}
\mathcal{S}=\Big\{(m,n,z): (m,n)\in \mathbb{Z}_+^2, z\in \{\mathsf{e},
\mathsf{c}\}\Big\}\cup\{\phi\},
\end{eqnarray}
where $\phi$ denotes the cost-free terminal state, i.e., the state
after the end of the path has been discovered. The action taken at
step $k$ is denoted $U_k\in\{0,1\}$, where $U_k=1$ is the action to
place a relay, and $U_k=0$ is the action of not placing a relay. When
the state is $(m,n,\mathsf{c})$ and when action $u$ is taken, the
transition probabilities are given by:
\begin {itemize}
\item If $u$ is $0$ then,\\
(\textbf{i})  $(m,n,\mathsf{c})$ $\longrightarrow$ $(m+1,n,\mathsf{c})$ w.p.\ $(1-p)q$\\
(\textbf{ii}) $(m,n,\mathsf{c})$ $\longrightarrow$ $(m+1,n,\mathsf{e})$  w.p.\ $pq$\\
(\textbf{iii}) $(m,n,\mathsf{c})$ $\longrightarrow$ $(m,n+1,\mathsf{c})$ w.p.\ $(1-p)(1-q)$\\
(\textbf{iv}) $(m,n,\mathsf{c})$ $\longrightarrow$ $(m,n+1,\mathsf{e})$ w.p.\ $p(1-q)$.

\item If $u$ is $1$ then\\
(\textbf{i}) $(m,n,\mathsf{c})\longrightarrow (1,0,\mathsf{c})$ w.p.\  $(1-p)q$\\
(\textbf{ii}) $(m,n,\mathsf{c})\longrightarrow (1,0,\mathsf{e})$ w.p.\  $pq$\\
(\textbf{iii}) $(m,n,\mathsf{c})\longrightarrow (0,1,\mathsf{c})$ w.p.\  $(1-p)(1-q)$\\
(\textbf{iv}) $(m,n,\mathsf{c})\longrightarrow (0,1,\mathsf{e})$ w.p.\  $p(1-q)$.
\end{itemize}

If $Z_k=\mathsf{e}$ then the only allowable action is $u=1$ and we
enter into the state $\phi$. If the current state is $\phi$, we stay
in the same cost-free termination state irrespective of the control
$u$. The one step cost when the state is $s\in\mathcal{S}$ is given
by:
\begin{eqnarray*}
c(s,u)=\left\{ \begin{array}{l l}
		d(m,n) & \mbox{ if } s=(m,n,\mathsf{e}),\\
		\lambda+d(m,n) & \mbox{ if } u=1 \mbox{ and } s=(m,n,\mathsf{c}), \\
		0 & \mbox{ if } u=0 \mbox{ or } s=\phi . \end{array}\right.		
\end{eqnarray*}
For simplicity we write the state $(m,n,\mathsf{c})$ as simply $(m,n)$.

\subsection{Optimal Placement Set $\mathcal{P}_\lambda$}
\label{nlos_relaxed_problem_section}
Let $J_{\lambda}(m,n)$ denote the optimal cost-to-go when the current
state is $(m,n)$. When at some step the state is $(m,n)$ the
deployment person has to decide whether to place or not place a relay
at the current step.  $J_\lambda$ is the solution of the Bellman
equation \cite[Page~137, Prop.~1.1]{Bertsekas2},
\begin{eqnarray}
J_{\lambda}(m,n)=\min\{c_p(m,n),c_{np}(m,n)\},
\end{eqnarray}
where $c_p(m,n)$ and $c_{np}(m,n)$ denote the expected cost incurred when the decision
is to \emph{place} and \emph{not place} a relay, respectively. $c_p(m,n)$ is given by
\begin{eqnarray} \label{cp}
c_p(m,n)&=&\lambda+d(m,n)+(1-p)(1-q)J_{\lambda}(0,1)\nonumber \\
&&+(1-p)qJ_{\lambda}(1,0)+pd(1).
\end{eqnarray}
The term $\lambda+d(m,n)$ in the above expression is the one step cost which is first 
incurred when a relay is placed. The remaining terms are the average cost-to-go from 
the next step. The term $(1-p)(1-q)J_{\lambda}(0,1)$ can be understood as follows:
$(1-p)(1-q)$ is the probability that the path proceeds Eastward without ending. 
Thus the state at the next step is $(0,1,\mathsf{c})$ w.p.\ $(1-p)(1-q)$, the optimal
cost-to-go from which is, $J_{\lambda}(0,1)$. Similarly for the term $(1-p)qJ_{\lambda}(1,0)$, 
$(1-p)q$ is the probability that the path will proceed, without ending, towards the North
(thus the next state is $(1,0,\mathsf{c})$) and  $J_{\lambda}(1,0)$ is the 
cost-to-go from the next state. Finally, in the term $pd(1)$, $p$ is the probability
that the path will end, either proceeding East or North, at the next step and $d(1)$
is the cost of the last link.
Following a similar explanation, the expression for $c_{np}(m,n)$ can be written as:
\begin{eqnarray} \label{cnp}
\lefteqn{c_{np}(m,n)=}\nonumber\\
&&(1-p)qJ_{\lambda}(m+1,n)+(1-p)(1-q)J_{\lambda}(m,n+1) \nonumber\\
&&+pqd(m+1,n)+p(1-q)d(m,n+1).
\end{eqnarray}

We define the optimal placement set $\mathcal{P}_\lambda$ as the set of all lattice points $(m,n)$, 
where it is optimal to place rather than to not place a relay. Formally,
\begin{eqnarray}
\label{defn}
\mathcal{P}_\lambda=\Big\{(m,n):c_p(m,n)\leq c_{np}(m,n)\Big\}.
\end{eqnarray}
In this definition, if the costs of placing and not-placing are the
same, we have arbitrarily chosen to place at that point.

The above result yields the following main theorem of this section
which characterizes the optimal placement set $\mathcal{P}_\lambda$ in
terms of a boundary.
\begin{theorem} 
\label{placement_boundary}
The optimal placement set $\mathcal{P}_\lambda$ is characterized by a boundary, i.e., there exist 
mappings  $m^*:\mathbb{Z}_+\rightarrow \mathbb{Z}_+$ and  $n^*:\mathbb{Z}_+\rightarrow \mathbb{Z}_+$ such that:
\begin{eqnarray}
\mathcal{P}_\lambda&=&\bigcup_{n\in\mathbb{Z}_+}\{(m,n): m\geq m^*(n)\}  \label{set11}\\
&=&\bigcup_{m\in\mathbb{Z}_+}\{(m,n): n\geq n^*(m)\}.\label{set22}
\end{eqnarray}
\end{theorem}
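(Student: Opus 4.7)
My plan is to prove the threshold structure by showing that $\mathcal{P}_\lambda$ is coordinatewise up-closed: whenever $(m,n)\in\mathcal{P}_\lambda$, both $(m+1,n)$ and $(m,n+1)$ are also in $\mathcal{P}_\lambda$. Granted this, for each fixed $n$ the slice $\{m\in\mathbb{Z}_+:(m,n)\in\mathcal{P}_\lambda\}$ is an up-ray in $\mathbb{Z}_+$, so I may define $m^*(n):=\min\{m:(m,n)\in\mathcal{P}_\lambda\}$ (with $m^*(n)=\infty$ when the slice is empty) and symmetrically $n^*(m)$; the two unions in the statement are then just two equivalent descriptions of the same up-set, sliced row-wise and column-wise respectively.

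The natural object to monitor is the gap $g_\lambda(m,n):=c_{np}(m,n)-c_p(m,n)$, whose sign determines membership in $\mathcal{P}_\lambda$. Using (\ref{cp}) and (\ref{cnp}), and the observation that every term of $c_p(m,n)$ other than $d(m,n)$ is constant in $(m,n)$, a direct computation of $\Delta_1 g_\lambda(m,n):=g_\lambda(m+1,n)-g_\lambda(m,n)$ produces a linear combination of $\Delta_1 J_\lambda(m+1,n)$, $\Delta_1 J_\lambda(m,n+1)$, $\Delta_1 d(m+1,n)$, $\Delta_1 d(m,n+1)$ and $\Delta_1 d(m,n)$ with coefficients $(1-p)q$, $(1-p)(1-q)$, $pq$, $p(1-q)$ and $-1$ respectively. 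Lemma~\ref{cor1} lower-bounds each shifted $\Delta_1 d$ by $\Delta_1 d(m,n)$, which collapses the $\Delta_1 d$ contributions to at least $-(1-p)\,\Delta_1 d(m,n)$ and reduces the positivity of $\Delta_1 g_\lambda$ to a weighted-difference inequality for $J_\lambda$ of the form $q\,\Delta_1 J_\lambda(m+1,n)+(1-q)\,\Delta_1 J_\lambda(m,n+1)\geq \Delta_1 d(m,n)$; an analogous reduction handles $\Delta_2 g_\lambda$.

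To establish these reduced inequalities for the fixed point $J_\lambda$, I would run value iteration $V_{k+1}=TV_k$ from a judiciously chosen $V_0$ and prove by induction that each iterate $V_k$ carries a joint invariant consisting of (a) componentwise monotonicity and (b) the weighted-difference inequality above with $V_k$ in place of $J_\lambda$. At each step $T$ takes the pointwise minimum of a \emph{place} branch, whose coordinate-wise first difference equals $\Delta_i d$ and passes (b) directly through Lemma~\ref{cor1}, and a \emph{not-place} branch, which is a convex mixture of shifted $\Delta_i V_k$ terms that inherits (b) from the inductive hypothesis together with condition (\textbf{C4}). Since $p>0$ ensures geometric termination, standard total-cost MDP arguments give $V_k\to J_\lambda$ and the invariant passes to the limit.

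The hardest step will be pinning down exactly the form of the invariant (b) that survives the pointwise minimum in the Bellman operator: ordinary convexity of $V_k$ in $(m,n)$ is not preserved under $\min$, so the invariant has to be formulated at the level of first differences of $V_k$ rather than second differences --- essentially a discrete analogue of condition (\textbf{C4}) on $d$, and exactly the conclusion of Lemma~\ref{cor1} lifted from $d$ to the value function. Once this discrete supermodular-style invariant is in hand, the non-negativity of $\Delta_1 g_\lambda$ and $\Delta_2 g_\lambda$ follows from the reduction above, so $\mathcal{P}_\lambda$ is coordinatewise up-closed and both boundary representations (\ref{set11}) and (\ref{set22}) follow immediately with $m^*$ and $n^*$ defined as above.
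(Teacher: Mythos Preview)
Your overall strategy coincides with the paper's: rewrite $\mathcal{P}_\lambda=\{(m,n):F(m,n)\ge K\}$ for a constant $K$, establish that $F$ is coordinatewise non-decreasing via a value-iteration invariant, and read off the thresholds. The substantive difference is that the paper has already isolated the clean invariant you are still searching for. Instead of carrying your weighted inequality
\[
q\,\Delta_1 V_k(m{+}1,n)+(1-q)\,\Delta_1 V_k(m,n{+}1)\ \ge\ \Delta_1 d(m,n)
\]
through the Bellman $\min$ (which is not obviously stable under $\min$, and is in any case weaker than what is needed), the paper subtracts the hop cost and works with $H_K(m,n):=J_K(m,n)-d(m,n)$, proving by induction on the horizon $K$ that $H_K$ is non-decreasing in both coordinates. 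This is precisely the ``first-difference'' invariant you allude to, but recast as plain monotonicity of a single auxiliary function; the induction is then immediate because the \emph{place} branch gives $H=\lambda+\text{const}$, the \emph{not-place} branch gives $(1-p)\bigl[qH_{K-1}(m{+}1,n)+(1-q)H_{K-1}(m,n{+}1)\bigr]+\Delta_q(m,n)$ (non-decreasing by the inductive hypothesis and Lemma~\ref{cor1}), and the minimum of two non-decreasing functions is non-decreasing. Your reduced inequality then falls out of $\Delta_i J_\lambda\ge\Delta_i d$ together with Lemma~\ref{cor1}, so the up-closedness of $\mathcal{P}_\lambda$ follows.

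One gap to close: the theorem states $m^*:\mathbb{Z}_+\to\mathbb{Z}_+$, so every row slice must be nonempty and you cannot default to $m^*(n)=\infty$. The paper uses condition (\textbf{C3}), which forces $\Delta_1(m,n)\to\infty$ as $m\to\infty$ (and symmetrically in $n$), to guarantee that $F(m,n)$ eventually exceeds the finite constant $K$; this is what makes each threshold finite.
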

\begin{proof}[Proof Outline]
  The proof utilizes the conditions \textbf{C2} and \textbf{C3}
  imposed on the cost function $d(\cdot)$. First, using (\ref{cp}) and
  (\ref{cnp}) in (\ref{defn}) and rearranging we alternatively write
  $\mathcal{P}_\lambda$ as, $\mathcal{P}_\lambda=\{(m,n): F(m,n)\ge
  K\}$, where $K$ is a constant and $F(\cdot,\cdot)$ is some function
  of $m$ and $n$. Then, we complete the proof by showing that $F(m,n)$
  is non-decreasing in both $m$ and $n$. This requires us to prove
  (using an induction argument) that $H_{\lambda}(m,n):=
  J_{\lambda}(m,n)-d(m,n)$ is non-decreasing in $m$ and $n$. Also,
  Lemma~\ref{cor1} has to be used here. For a formal proof see
  Appendix~\ref{placement_boundary_appendix}.
\end{proof}

\emph{Remark:} Though the optimal placement set $\mathcal{P}_\lambda$
was characterized nicely in terms of a boundary $m^*(\cdot)$ and
$n^*(\cdot)$, a naive approach of computing this boundary, using value
iteration to obtain $J_\lambda(m,n)$ (for several values of
$(m,n)\in\mathbb{Z}_+^2$), would be computationally intensive. Our
effort in the next section (Section~\ref{OSLA_formulation_section}) is
towards obtaining an alternate simplified representation for
$\mathcal{P}_\lambda$ using which we propose an algorithm in
Section~\ref{FPI_NLOS_section}, which is guaranteed to return
$\mathcal{P}_\lambda$ in a finite (in practice, small) number of
steps.

\section{Optimal Stopping Formulation} 
\label{OSLA_formulation_section}
We observe that the points where the path has not ended, and a relay is placed, 
are renewal points of the 
decision process. This motivates us to think of the decision process after a relay is placed as an 
optimal stopping problem with \emph{termination cost} $J_\lambda(0,0)$ (which is the optimal cost-to-go 
from a relay placement point). Let $\overline{\mathcal{P}}_\lambda$ denote the placement set 
corresponding to the OSLA rule (to be defined next). In this section we prove our next  main result 
that $\mathcal{P}_\lambda=\overline{\mathcal{P}}_\lambda$.

\subsection{One-Step-Look-Ahead Stopping Set
  $\overline{\mathcal{P}}_{\lambda}$}
Under the OSLA rule, a relay is placed at state $(m,n,\mathsf{c})$ if
and only if the ``cost $c_1(m,n)$ of \emph{stopping} (i.e., placing a
relay) at the current step'' is less than the ``cost $c_2(m,n)$ of
continuing (without placing relay at the current step) for one more
step, and then stopping (i.e., placing a relay at the next step)''. The
expressions for the costs $c_1(m,n)$ and $c_2(m,n)$ can be written as:
\begin{eqnarray*}
c_1(m,n)&=&\lambda+d(m,n)+J_\lambda(0,0)
\end{eqnarray*}
and
\begin{eqnarray*}
\lefteqn{c_2(m,n)=}\nonumber\\
&& pq(d(m+1,n)+p(1-q)d(m,n+1))+(1-p)\nonumber\\
&&\Big(qd(m+1,n)+(1-q)d(m,n+1)+\lambda+J_\lambda(0,0)\Big).\nonumber
\end{eqnarray*}
Then we define the OSLA placement set $\overline{\mathcal{P}}_\lambda$ as:
\begin{eqnarray*}
\overline{\mathcal{P}}_\lambda &=& \{(m,n)\in \mathbb{Z}_{+}^2: c_1(m,n)\le c_2(m,n)\}.  
\end{eqnarray*}
Substituting for $c_1(m,n)$ and $c_2(m,n)$ and simplifying we obtain:
\begin{eqnarray}
\label{OSLA_2}
\overline{\mathcal{P}}_{\lambda}=\Big\{(m,n)\in \mathbb{Z}_{+}^2: p(\lambda+J_{\lambda}(0,0))\leq\Delta_q(m,n)\Big\},
\end{eqnarray}
where $\Delta_q(m,n)=q\Delta_1(m,n)+ (1-q)\Delta_2(m,n)$.
\begin{theorem} 
The OSLA rule is a threshold policy, i.e., there exist mappings  
$\bar{m}:\mathbb{Z}_+\rightarrow \mathbb{Z}_+$ and  $\bar{n}:\mathbb{Z}_+\rightarrow \mathbb{Z}_+$, 
which define the one-step placement set $\overline{\mathcal{P}}_{\lambda}$ as follows, 
\begin{eqnarray}
\overline{\mathcal{P}}_\lambda&=&\bigcup_{n\in\mathbb{Z}_+}\{(m,n): m\geq \bar{m}(n)\}\label{OSLA2_1}\\
&=&\bigcup_{m\in\mathbb{Z}_+}\{(m,n): n\geq \bar{n}(m)\}\label{OSLA2_2}.
\end{eqnarray}
\end{theorem}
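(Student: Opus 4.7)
The plan is to exploit the fact that the defining inequality in (\ref{OSLA_2}) has a constant left-hand side and a coordinatewise non-decreasing right-hand side, so that $\overline{\mathcal{P}}_\lambda$ is a coordinatewise upper set, from which the threshold structure follows by taking infima.

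First I would set $K := p(\lambda + J_\lambda(0,0))$ and note that $K$ is a finite constant independent of $(m,n)$; finiteness of $J_\lambda(0,0)$ is easily seen by bounding it above using the naive policy that places a relay at every step (which has finite expected cost under the geometric path length). Next, by Lemma~\ref{cor1} both $\Delta_1(m,n)$ and $\Delta_2(m,n)$ are non-decreasing in each of $m$ and $n$, and since $q \in [0,1]$ the convex combination $\Delta_q(m,n) = q\Delta_1(m,n) + (1-q)\Delta_2(m,n)$ inherits this monotonicity. I then define
\begin{eqnarray*}
\bar{m}(n) &:=& \min\{m \in \mathbb{Z}_+ : \Delta_q(m,n) \geq K\}, \\
\bar{n}(m) &:=& \min\{n \in \mathbb{Z}_+ : \Delta_q(m,n) \geq K\}.
\end{eqnarray*}
Monotonicity of $\Delta_q$ in $m$ gives: for $m \geq \bar{m}(n)$, $\Delta_q(m,n) \geq \Delta_q(\bar{m}(n),n) \geq K$, so $(m,n) \in \overline{\mathcal{P}}_\lambda$; while for $m < \bar{m}(n)$ the defining inequality fails by minimality. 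This proves (\ref{OSLA2_1}), and (\ref{OSLA2_2}) is symmetric.

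The one step that requires care is to confirm that $\bar{m}(n)$ and $\bar{n}(m)$ are actually finite, so that they indeed map into $\mathbb{Z}_+$; this is where condition \textbf{C3} enters. For fixed $n$, as $m \to \infty$ the Euclidean increment $\sqrt{(m+1)^2+n^2} - \sqrt{m^2+n^2}$ tends to $1$ from below, so using the monotonicity of the difference quotient of the convex univariate function $d(\cdot)$, one can lower-bound $\Delta_1(m,n)$, for all sufficiently large $m$, by an expression of the form $d(r+\tfrac{1}{2}) - d(r)$ at a radial argument $r \to \infty$. By \textbf{C3} this lower bound diverges, and similarly for $\Delta_2(m,n)$ as $n \to \infty$, so $\Delta_q(m,n) \to \infty$ along every row and column, forcing the thresholds to be finite. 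I expect this finiteness argument to be the only nontrivial piece; the monotonicity-based derivation of the threshold structure itself is an immediate consequence of Lemma~\ref{cor1}.
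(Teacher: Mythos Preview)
Your proposal is correct and follows essentially the same route as the paper: the paper's proof simply observes that in (\ref{OSLA_2}) the left-hand side $p(\lambda+J_\lambda(0,0))$ is a constant while $\Delta_q(m,n)$ is non-decreasing in both coordinates by Lemma~\ref{cor1}, and then invokes the argument of Theorem~\ref{placement_boundary} (which uses \textbf{C3} for the finiteness of the thresholds). Your write-up is a more explicit version of exactly this reasoning, with the only extra content being a spelled-out lower bound for $\Delta_1(m,n)$ via the convexity of the radial cost; this is fine but not strictly needed, since the paper already records the divergence of $\Delta_1(m,n)$ as $m\to\infty$ directly from \textbf{C3} in the proof of Theorem~\ref{placement_boundary}.
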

\begin{proof}
  Noticing that in (\ref{OSLA_2}) $\Delta_q(m,n)$ is non-decreasing in
  $(m,n)$ and $p(\lambda+J_{\lambda}(0,0))$ is a constant, the proof
  follows along the lines of the proof of
  Theorem~\ref{placement_boundary}.
\end{proof}

Now, we present the main theorem of this section.
\begin{theorem} 
\label{optimality_OSLA_2}
\begin{eqnarray*}
\mathcal{P}_\lambda=\overline{\mathcal{P}}_\lambda.
\end{eqnarray*}
\end{theorem}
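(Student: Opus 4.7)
My plan is to prove the two inclusions separately: a direct algebraic comparison gives $\mathcal{P}_\lambda \subseteq \overline{\mathcal{P}}_\lambda$, while the reverse inclusion follows from the classical one-step-look-ahead sufficiency argument, anchored in the absorbing property of $\overline{\mathcal{P}}_\lambda$.

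First I would record a useful identity. Since $\lambda > 0$ and $d(0,0) > 0$, the expressions (\ref{cp}) and (\ref{cnp}) give $c_p(0,0) = \lambda + d(0,0) + c_{np}(0,0) > c_{np}(0,0)$, so $(0,0) \notin \mathcal{P}_\lambda$, and Bellman at $(0,0)$ reduces to
\[
J_\lambda(0,0) = (1-p)q\, J_\lambda(1,0) + (1-p)(1-q)\, J_\lambda(0,1) + p\, d(1).
\]
Plugging this back into (\ref{cp}) yields $c_p(m,n) = \lambda + d(m,n) + J_\lambda(0,0) = c_1(m,n)$ for every $(m,n) \in \mathbb{Z}_+^2$. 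Hence $\mathcal{P}_\lambda = \{(m,n) : c_1(m,n) \le c_{np}(m,n)\}$, while by definition $\overline{\mathcal{P}}_\lambda = \{(m,n) : c_1(m,n) \le c_2(m,n)\}$.

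For $\mathcal{P}_\lambda \subseteq \overline{\mathcal{P}}_\lambda$, I would observe that $J_\lambda(m{+}1,n) \le c_p(m{+}1,n)$ and $J_\lambda(m,n{+}1) \le c_p(m,n{+}1)$; substituting these upper bounds for the $J_\lambda$ terms in (\ref{cnp}) and using the simplifications $(1{-}p)q + pq = q$ and $(1{-}p)(1{-}q) + p(1{-}q) = 1{-}q$ recovers exactly $c_2(m,n)$, giving $c_{np}(m,n) \le c_2(m,n)$. Therefore $(m,n) \in \mathcal{P}_\lambda$ implies $c_1 \le c_{np} \le c_2$, so $(m,n) \in \overline{\mathcal{P}}_\lambda$.

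The reverse inclusion exploits the absorbing property of $\overline{\mathcal{P}}_\lambda$: by (\ref{OSLA_2}), $(m,n) \in \overline{\mathcal{P}}_\lambda$ iff $\Delta_q(m,n) \ge p(\lambda + J_\lambda(0,0))$, and since $\Delta_q = q\Delta_1 + (1-q)\Delta_2$ is coordinatewise non-decreasing by Lemma~\ref{cor1}, both successor states $(m{+}1,n)$ and $(m,n{+}1)$ remain in $\overline{\mathcal{P}}_\lambda$. Using the renewal structure at placement epochs, I would view the evolution since the last placement as an optimal stopping problem with stopping cost $c_1(m,n)$ and path-end cost $d(m,n)$, and invoke the standard OSLA sufficiency theorem: absorption plus almost-sure finite path length (geometric with parameter $p$) imply that immediate placement is optimal throughout $\overline{\mathcal{P}}_\lambda$. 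Thus $c_p(m,n) \le c_{np}(m,n)$, i.e., $(m,n) \in \mathcal{P}_\lambda$. The main obstacle is making the OSLA sufficiency rigorous in the present total-cost setting; if a direct reference is not clean, a self-contained induction on the number of additional non-placement steps — propagating $c_1 \le c_2$ via the absorbing property and passing to the limit using the geometric tail of the path length together with the polynomial growth of $d(\cdot)$ to justify dominated convergence — delivers the same conclusion.
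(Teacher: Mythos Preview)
Your proposal is correct, and it takes a genuinely different route from the paper's own proof. For the inclusion $\mathcal{P}_\lambda \subseteq \overline{\mathcal{P}}_\lambda$, the paper first invokes the threshold structure of $\mathcal{P}_\lambda$ (Theorem~\ref{placement_boundary}) to ensure that $(m{+}1,n),(m,n{+}1)\in\mathcal{P}_\lambda$, so that $J_\lambda$ equals $c_p$ at the successor states; you bypass this entirely by using the universal bound $J_\lambda \le c_p$, which yields $c_{np}\le c_2$ without any appeal to thresholds. For the reverse inclusion, the paper again leans on the threshold structure of \emph{both} sets: it supposes $\bar{n}(m)<n^*(m)$, walks East along the row $n^*(m)-1$ until it first enters $\mathcal{P}_\lambda$ (finiteness guaranteed by Theorem~\ref{placement_boundary}), and then peels backward one step at a time using the ``if both neighbors are in $\mathcal{P}_\lambda$ then so is the point'' lemma. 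Your argument instead exploits only the absorbing property of $\overline{\mathcal{P}}_\lambda$ (immediate from Lemma~\ref{cor1}) together with the classical OSLA sufficiency principle, which here amounts to the submartingale/telescoping observation that $c_1(X_k)\mathbf{1}_{\{T>k\}}+d(X_T)\mathbf{1}_{\{T\le k\}}$ is a submartingale once inside $\overline{\mathcal{P}}_\lambda$. What the paper's approach buys is a completely elementary, finite, self-contained argument with no limiting step; what yours buys is independence from Theorem~\ref{placement_boundary} and a conceptually cleaner story, at the cost of the limit justification you already flagged (geometric tail of $T$ against polynomial growth of $d$), which does go through.
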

\begin{proof}
See Appendix~\ref{optimality_OSLA_2_appendix}.
\end{proof}

\emph{Remark:} The characterization in (\ref{OSLA_2}) is much simpler than the one in (\ref{placement1}) 
once the value of $J_{\lambda}(0,0)$ is given. In the following subsection, we define a function
 $g(\cdot)$ and express 
$J_{\lambda}(0,0)$ as the minimum value of this function. 

\subsection{Computation of $J_{\lambda}(0,0)$}
\label{calculation_cost_section}
Let us start by defining a collection of placement sets indexed by $h \geq 0$:
\begin{eqnarray} \label{NLOS_Placement}
{\mathcal P}(h) = \{(m,n)\in\mathbb{Z}_+^2: p(\lambda + h) \leq\Delta_q(m,n)\}.
\end{eqnarray}
Referring to (\ref{OSLA_2}), note that ${\mathcal P}(J_{\lambda}(0,0))=\overline{\mathcal{P}}_\lambda$. Let
$g(h)$ denote the cost-to-go, starting from $(0,0)$, if the placement set $\mathcal{P}(h)$ is employed. 
Then, since $J_{\lambda}(0,0)$ is the optimal cost-to-go and 
$\mathcal{P}_\lambda \in \{{\mathcal P}(h) \}_{h\geq 0}$, we have $J_{\lambda}(0,0)=\min_{h\geq 0}g(h)$. 

To compute $g(h)$, we proceed by defining the boundary $\mathcal{B}(h)$ of $\mathcal{P}(h)$ as follows:
\begin{eqnarray}
\mathcal{B}(h)&=&\{(m,n)\in \mathcal{P}(h): (m-1,n)\in\mathcal{P}^c(h) \mbox{ or } \nonumber\\
&&\hspace{4mm}(m,n-1) \in \mathcal{P}^c(h) \},
\end{eqnarray}
where $\mathcal{P}^c(h):=\mathbb{Z}_+^2-\mathcal{P}(h)$.

Suppose the corridor ends at some $(m,n)\in \mathcal{P}^c(h)\cup \mathcal{B}(h)$, then only a cost of 
$d(m,n)$ is incurred. Otherwise (i.e., if the corridor reaches some $(m,n)\in \mathcal{B}(h)$ and 
continues), using a renewal argument, a cost of $d(m,n)+\lambda+g(h)$ is incurred, where 
$d(m,n)+\lambda$ is the cost of placing a relay and $g(h)$ is the future cost-to-go. We can thus write:
\begin{eqnarray} 
g(h)&=&\sum_{(m,n)\in \mathcal{P}^c(h)\cup \mathcal{B}(h)}\mathbb{P}((m,n),
\mathsf{e})d(m,n)+\nonumber \\
&&\sum_{(m,n)\in \mathcal{B}(h)}\!\mathbb{P}((m,n),\mathsf{c})(g(h)\!+\!\lambda\!+\!d(m,n)),\;\;\;
\end{eqnarray}
where $\mathbb{P}((m,n),\mathsf{e})$ is the probability of the corridor ending at $(m,n)$ and 
$\mathbb{P}((m,n),\mathsf{c})$ is the probability of the corridor reaching the boundary and continuing. 
Solving for $g(h)$, we obtain:
\begin{eqnarray} \label{NLOS_gh}
\label{Boundary}
g(h)&=&\frac{1}{1-\sum_{(m,n)\in \mathcal{B}(h)}
\mathbb{P}((m,n),\mathsf{c})}\times \nonumber\\
&&\Bigg(\sum_{(m,n)\in \mathcal{P}^c(h)\cup \mathcal{B}(h)}\mathbb{P}((m,n),
\mathsf{e})d(m,n)+\nonumber  \\
&&\sum_{(m,n)\in \mathcal{B}(h)}\mathbb{P}((m,n),\mathsf{c})(\lambda+d(m,n))\Bigg).
\end{eqnarray} 
The above expression is extensively used in our algorithm proposed in the next section.

We conclude this subsection by deriving the expression for the probabilities 
$\mathbb{P}((m,n),\mathsf{e})$ and $\mathbb{P}((m,n),\mathsf{c})$. Let us partition the 
boundary $\mathcal{B}(h)$ into three mutually disjoint sets:
\begin{eqnarray*}
\mathcal{B}^w(h)&=&\{(m,n)\in \mathcal{B}(h): (m-1,n)\in \mathcal{B}(h)\}\\
\mathcal{B}^s(h)&=&\{(m,n)\in \mathcal{B}(h): (m,n-1)\in \mathcal{B}(h)\}\\
\mathcal{B}^{null}(h)&=&\{(m,n)\in \mathcal{B}(h): (m-1,n)\notin \mathcal{B}(h)\mbox{ and }\nonumber\\
&&(m,n-1)\notin \mathcal{B}(h)\}.
\end{eqnarray*}
For a depiction of the various boundary points, see Fig.~\ref{Boundary_figure}.
\begin{figure}[t!]
\centering
\includegraphics[width=0.6\linewidth]{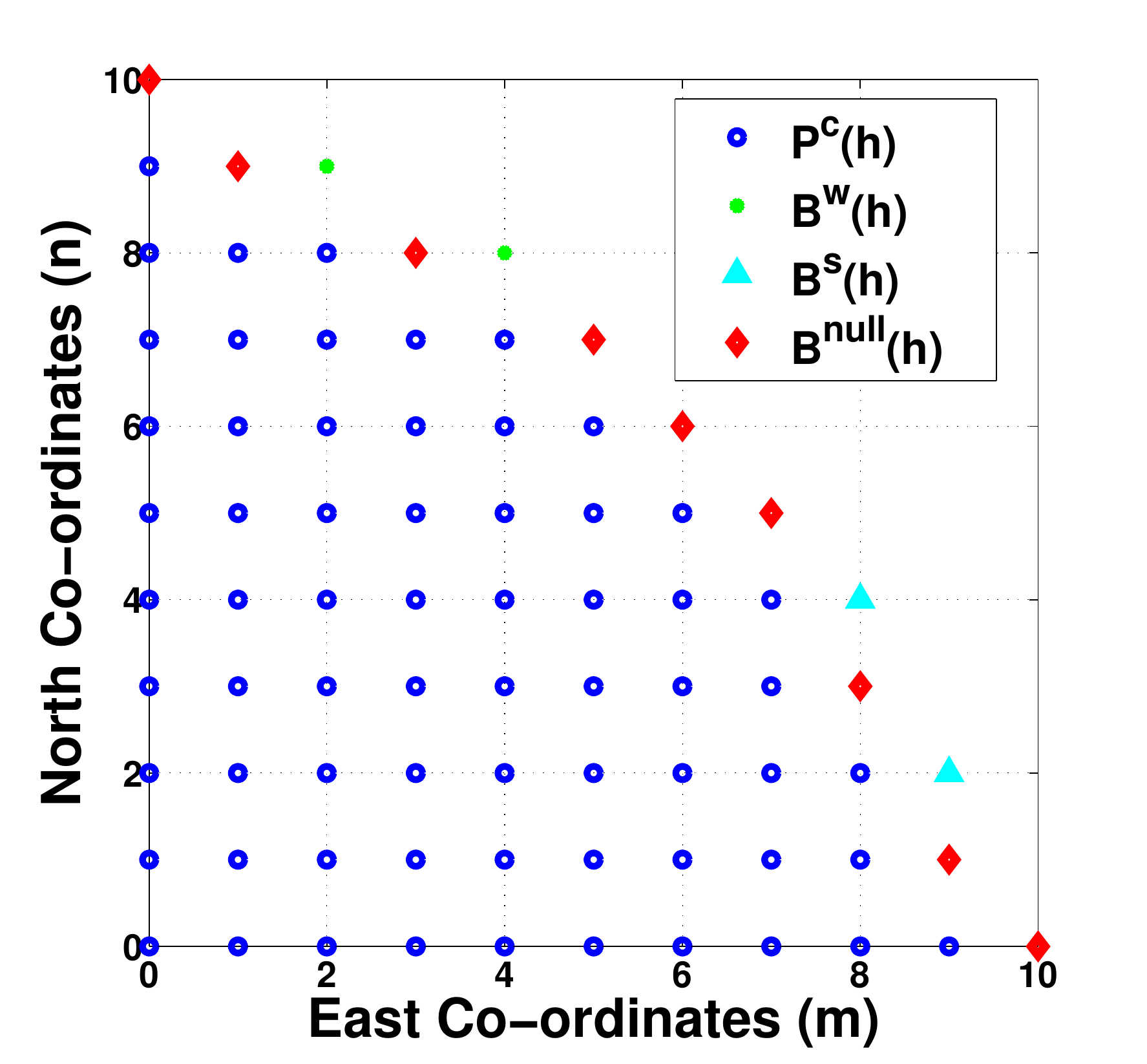}
\caption{Example of placement set of the form in (\ref{NLOS_Placement}): 'o' denotes lattice 
points outside the placement set; lattice points on the boundary can be partitioned into three 
sets according to the direction, from which they can be reached.}
\label{Boundary_figure}
\end{figure}
Now, $\mathbb P((m,n),\mathsf{e})$ can be written as:
\begin{eqnarray*}
\lefteqn{\mathbb P((m,n),\mathsf{e})=}\\
&&\left\{\begin{array}{lr}
\binom{m+n}{m}p(1-p)^{m+n-1}q^m(1-q)^n \\
\hspace{30mm}\mbox{ if } (m,n)\in \mathcal{P}^c(h)\cup \mathcal{B}^{null}(h)\\
\binom{m+n-1}{m} p(1\!-\!p)^{m+n-1} q^m(1\!-\!q)^n \mbox{ if }(m,n)\!\in\! \mathcal{B}^{w}(h)\\
\binom{m+n-1}{m-1} p(1\!-\!p)^{m+n-1} q^m(1\!-\!q)^n \mbox{ if }(m,n)\!\in\! \mathcal{B}^{s}(h).
\end{array}\right.
\end{eqnarray*}
This can be understood as follows. Any point $(m,n)\in \mathcal{P}^c(h)\cup 
\mathcal{B}^{null}(h)$ can be reached from West or South. $\binom{m+n}{m}$ is the number of 
possible paths for reaching $(m,n)$. Each such path has to go $m$ times Eastwards (thus the 
term $q^m$) and $n$ times Northwards (thus the term $(1-q)^n$) and finally ending at $(m,n)$ 
(thus the term $p(1-p)^{m+n-1}$). Any point $(m,n)\in \mathcal{B}^{w}(h)$ can be reached only 
from South point $(m,n-1)$. The probability of reaching $(m,n-1)$ without ending is 
$\binom{m+n-1}{m} (1-p)^{m+n-1} q^m(1-q)^{n-1}$. Then, the corridor reaches $(m,n)$ and ends 
with probability $p(1-q)$. $\mathbb P((m,n),\mathsf{e})$ for $(m,n) \in \mathcal{B}^{s}(h)$ can 
be obtained analogously. 

Similarly, $\mathbb P((m,n),\mathsf{c})$ can be written as:
\begin{eqnarray*}
 \lefteqn{\mathbb P((m,n),\mathsf{c})=}\\
 &&\left\{\begin{array}{ll}
 \binom{m+n}{m}(1-p)^{m+n}q^m(1-q)^n \\
\hspace{30mm}\mbox{ if } (m,n)\in \mathcal{P}^c(h)\cup \mathcal{B}^{null}(h)\\
 \binom{m+n-1}{m} (1-p)^{m+n} q^m(1-q)^n \mbox{ if }(m,n)\in \mathcal{B}^{w}(h)\\
 \binom{m+n-1}{m-1} (1-p)^{m+n} q^m(1-q)^n \mbox{ if }(m,n)\in \mathcal{B}^{s}(h).
\end{array}\right.
 \end{eqnarray*}

\section{OSLA Based Fixed Point Iteration Algorithm} 
\label{FPI_NLOS_section}
In this section, we present an efficient fixed point iteration
algorithm (Algorithm~\ref{Algo}) using the OSLA rule in (\ref{OSLA_2})
for obtaining the optimal placement set, $\mathcal{P}_\lambda$, and
the optimal cost-to-go, $J_\lambda(0,0)$. There are two advantages of
our algorithm over the naive approach of directly trying to minimize
the function $g(\cdot)$ to obtain $J_\lambda(0,0)$ (recall that
$J_\lambda(0,0)=\min_{h\ge0} g(h)$):
\begin{itemize}
\item On the theoretical side, this iterative algorithm avoids
  explicit optimization altogether, which, otherwise would be
  performed numerically over a continuous range. Without any structure
  on the objective function, direct numerical minimization of
  $g(\cdot)$ is difficult and often unsatisfactory, as it  invariably
  uses some sort of heuristic search over this continuous range.
\item On the practical side, this algorithm is proved to converge
  within a finite number of iterations and observed to be extremely
  fast (requires 3 to 4 iterations typically). 
\end{itemize}  

The following is our Algorithm which we refer to as the 
OSLA Based Fixed Point Iteration Algorithm.
\begin{algorithm} 
\caption{OSLA Based Fixed Point Iteration Algorithm}
\begin{algorithmic}[1] 
\REQUIRE $0<p<1$, $0\leq q\leq 1$, $\lambda\geq 0$
\STATE $k=0$, $h^{(k)}=0$
\WHILE{1} 
\STATE $\mathcal{P}(h^{(k)})\gets \{(m,n)\in \mathbb{Z}_{+}^2: p(\lambda+h^{(k)})\leq \Delta_q(m,n)\}$
\STATE Compute $g(h^{(k)})$ using (\ref{Boundary})
\IF{$g(h^{(k)})==h^{(k)}$} \STATE Break; \ENDIF
\STATE $h^{(k+1)}\gets g(h^{(k)})$
\STATE $k\gets k+1$ 
\ENDWHILE
\STATE {\textbf{return} $g(h^{(k)})$, $\mathcal{P}(h^{(k)})$}
\end{algorithmic}
\label{Algo}
\end{algorithm}

We now prove the correctness and finite termination properties of our
algorithm.  First, we define $g^*:=J_{\lambda}(0,0)=\min_{h\geq
  0}g(h)$.  Now consider a sample plot of the function $g(h)$ in
Fig.~\ref{func_g_figure}.  From Fig.~\ref{h_gh} observe that whenever
$h>g^*$ (which is around 150), $h>g(h)$. Also,
Fig.~\ref{h_gh_superimposed} (where we have plotted the functions
$g(h)$ and $l(h)=h$) suggests that $g(h)$ has a unique fixed point. We
formally prove these results.


\begin{figure}[h]
\centering
\subfigure[]{
\includegraphics[width=0.6\linewidth]{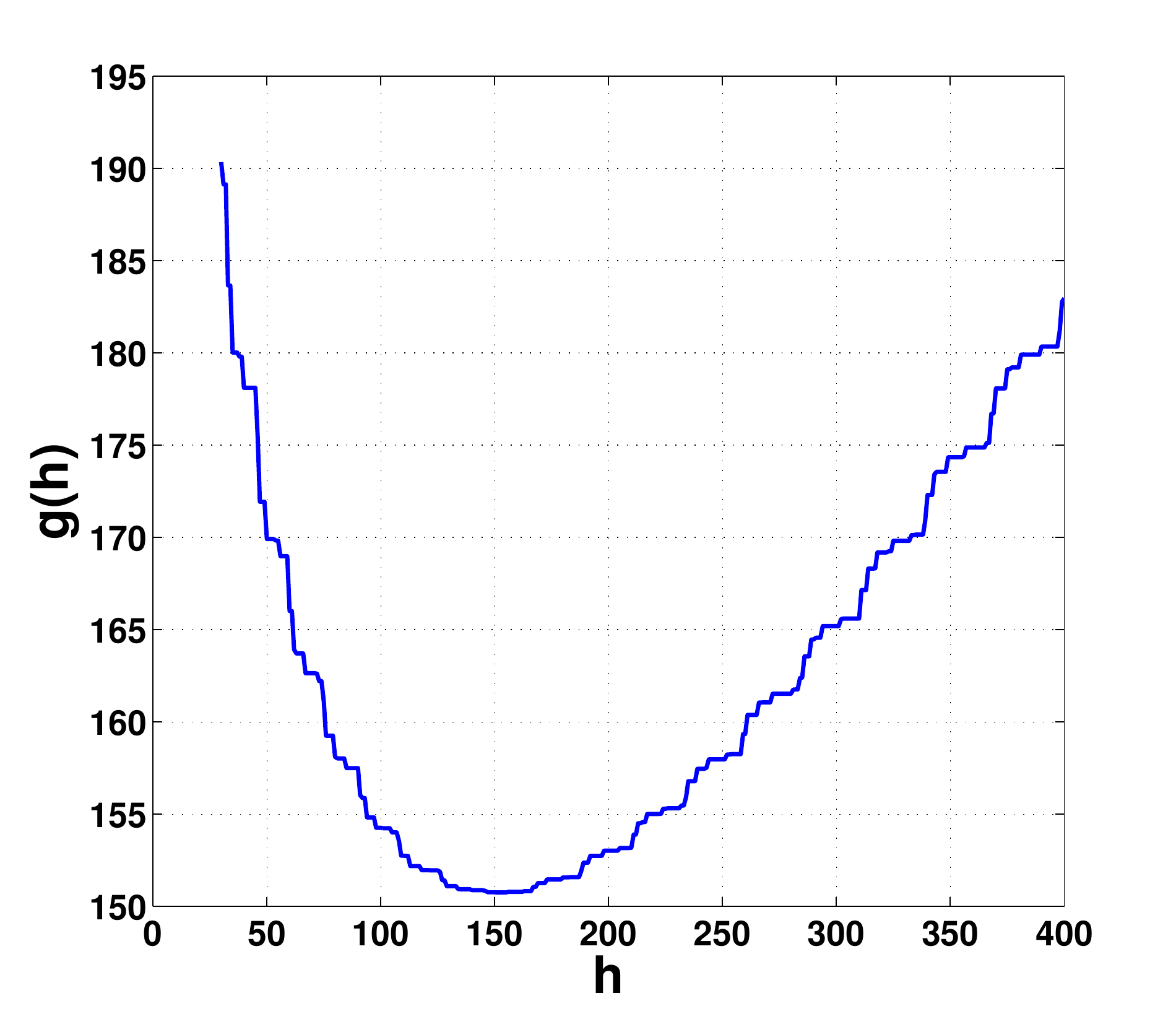}
       \label{h_gh}
}
\subfigure[]{
\includegraphics[width=0.6\linewidth]{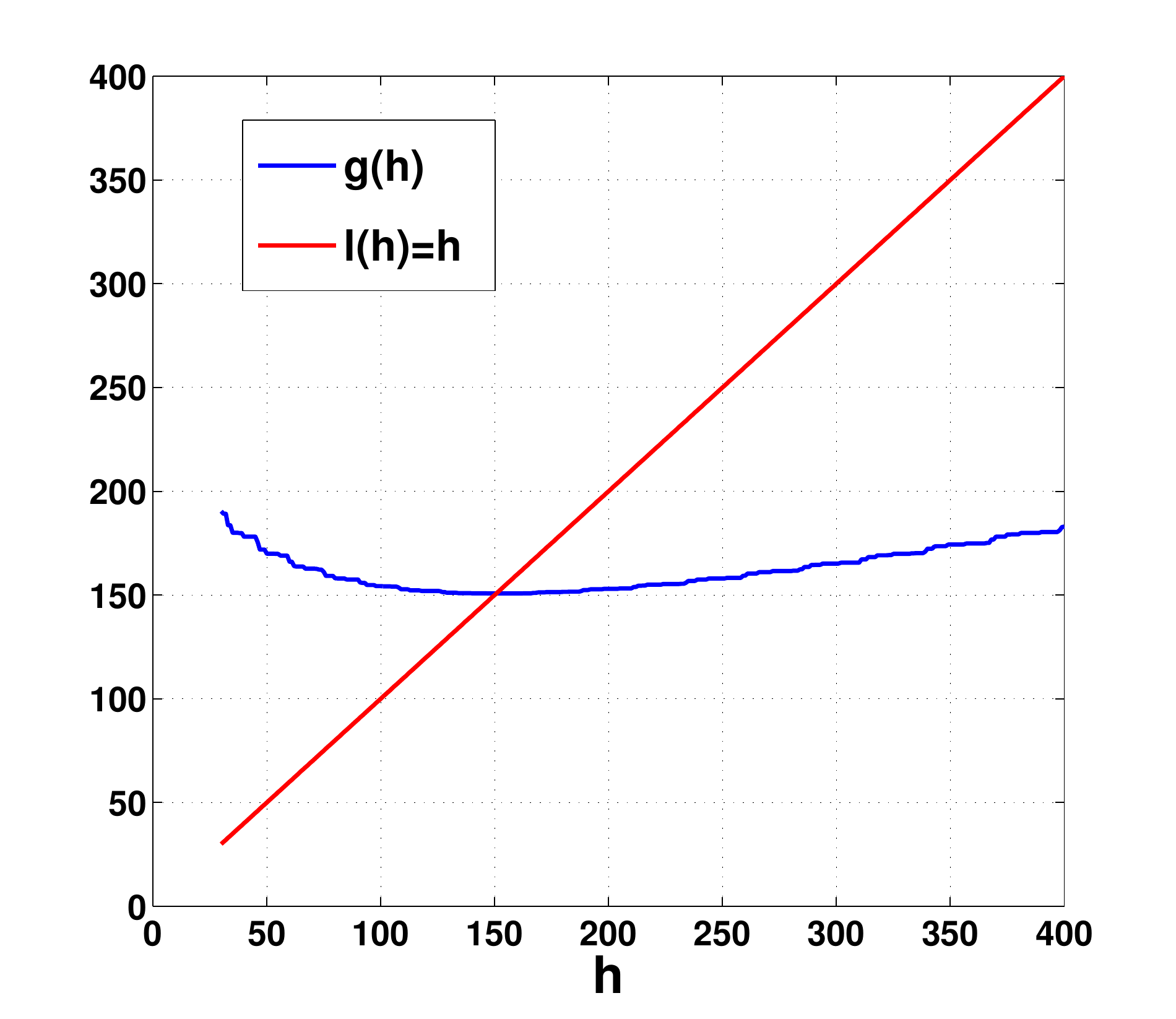}
        \label{h_gh_superimposed}
}
\caption{\subref{h_gh} Cost-to-go $g(h)$ as a function of $h$  \subref{h_gh_superimposed} Zoom on the cost-to-go $g(h)$ as a function of $h$.
These plots are for $p=0.02$, $q=0.5$, and $\lambda=41$.\label{func_g_figure}
}
\end{figure}
 
\begin{lemma} \label{FPI_NLOS}
If $h>g^*$ then $h>g(h)$.
\end{lemma}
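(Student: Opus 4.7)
My plan is to combine the renewal structure of $g(\cdot)$ with the fact that $\mathcal{P}(h)$ is globally (not merely one-step) optimal for an auxiliary stopping problem. From (\ref{NLOS_gh}), $g(h)$ satisfies the fixed-point relation
\[
g(h)\;=\;T(h)\,+\,\beta(h)\,g(h),
\]
where $\beta(h):=\sum_{(m,n)\in\mathcal{B}(h)}\mathbb{P}((m,n),\mathsf{c})\in[0,1)$ and $T(h)$ denotes the remaining (non-$g(h)$) portion of the numerator in (\ref{NLOS_gh}); analogous quantities $T_{\mathcal{P}},\beta_{\mathcal{P}}$ are defined for any placement set $\mathcal{P}$ by replacing $\mathcal{B}(h)$ with the boundary of $\mathcal{P}$. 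Since $1-\beta(h)>0$, the inequality $g(h)<h$ is equivalent to $T(h)+\beta(h)\,h<h$, so I would reduce the lemma to establishing this strict inequality whenever $h>g^*$.

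For each fixed $h\ge 0$, the set $\mathcal{P}(h)$ is precisely the OSLA stopping set for the auxiliary optimal-stopping problem whose relay-placement cost at $(m,n)$ equals $\lambda+d(m,n)+h$. Because $\Delta_q(m,n)$ is non-decreasing in both coordinates (Lemma~\ref{cor1}) and the walk moves only East or North, $\mathcal{P}(h)$ is absorbing under the dynamics, so the monotone-stopping argument used in the proof of Theorem~\ref{optimality_OSLA_2} (with $h$ in place of $J_\lambda(0,0)$) upgrades OSLA-optimality to global optimality of $\mathcal{P}(h)$. Consequently,
\[
T(h)+\beta(h)\,h\;\le\;T_{\mathcal{P}_\lambda}+\beta_{\mathcal{P}_\lambda}\,h.
\]
The fixed-point identity $g(g^*)=g^*$ rearranges to $T_{\mathcal{P}_\lambda}=(1-\beta_{\mathcal{P}_\lambda})\,g^*$, so the right-hand side simplifies to $g^*+\beta_{\mathcal{P}_\lambda}(h-g^*)$. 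For $h>g^*$ this is strictly less than $h$, because $\beta_{\mathcal{P}_\lambda}<1$ (the assumption $p>0$ ensures that the corridor terminates before reaching the boundary with positive probability). Chaining the inequalities gives $T(h)+\beta(h)\,h<h$, hence $g(h)<h$.

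The main technical point I anticipate is justifying global optimality of $\mathcal{P}(h)$ at an \emph{arbitrary} continuation value $h$, not merely at the self-consistent value $h=g^*$ treated in Theorem~\ref{optimality_OSLA_2}; this hinges essentially on the absorption property of $\mathcal{P}(h)$ together with the monotonicity of $\Delta_q$, and I would handle it by replaying the argument used for Theorem~\ref{optimality_OSLA_2} verbatim with $h$ substituted for $J_\lambda(0,0)$.
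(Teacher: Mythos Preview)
Your argument is correct but proceeds along a genuinely different route from the paper. The paper's proof is purely computational: it first derives the algebraic identity (Lemma~\ref{g_h_Eqn})
\[
\sum_{(m,n)\in\mathcal{P}^c(h)}r(m,n)\bigl(\Delta_q(m,n)-p(\lambda+g(h))\bigr)+d(0,0)+\lambda=0,
\]
writes it once for $\mathcal{P}(h)$ and once for $\mathcal{P}_\lambda$, and then compares the two term-by-term over the set difference $\mathcal{P}^c(h)\setminus\mathcal{P}_\lambda^c$, using only the defining inequality $\Delta_q(m,n)<p(\lambda+h)$ on that set. No optimality of $\mathcal{P}(h)$ for any auxiliary problem is invoked; the proof is entirely self-contained algebra.

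Your approach instead recognises $T_{\mathcal{P}}+\beta_{\mathcal{P}}h$ as the value of a one-shot stopping problem with fixed terminal reward $h$, observes that $\mathcal{P}(h)$ is the OSLA set for that problem, and appeals to the classical monotone-case result (OSLA is globally optimal whenever the OSLA set is absorbing under the dynamics) to obtain $T(h)+\beta(h)h\le T_{\mathcal{P}_\lambda}+\beta_{\mathcal{P}_\lambda}h$. The remaining step, using the fixed-point identity for $g^*$ and $\beta_{\mathcal{P}_\lambda}<1$, is clean. This route is more conceptual and shorter, but it imports an external optimal-stopping fact; the paper's route is longer but needs nothing beyond the renewal identity it proves. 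One small remark: your pointer to ``replaying the proof of Theorem~\ref{optimality_OSLA_2}'' is a bit loose, since that theorem concerns the \emph{repeated} placement problem rather than the one-shot auxiliary problem; what you actually need is the standard absorbing-OSLA argument (stopping dominates any continuation inside $\mathcal{P}(h)$ by induction on steps, and outside $\mathcal{P}(h)$ continuing one step strictly beats stopping), which is indeed valid here because of the monotonicity of $\Delta_q$ and the North/East dynamics.
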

\begin{proof}
This follows from the manipulation of (\ref{NLOS_gh}). See Appendix \ref{FPI_NLOS_appendix} 
for details.
\end{proof}
\begin{lemma} \label{uniqueFP_NLOS}
 $g(h)$ has a unique fixed point.
\end{lemma}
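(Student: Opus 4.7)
The plan is to pin down the fixed point explicitly as $g^* := J_\lambda(0,0) = \min_{h \geq 0} g(h)$, and then use Lemma~\ref{FPI_NLOS} to rule out all other candidates. So I would prove existence and uniqueness separately, with uniqueness essentially immediate from the previous lemma and existence following from the OSLA identification of the optimal placement set.

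For existence, I would argue as follows. By (\ref{OSLA_2}), the OSLA placement set satisfies $\overline{\mathcal{P}}_\lambda = \mathcal{P}(J_\lambda(0,0)) = \mathcal{P}(g^*)$, and by Theorem~\ref{optimality_OSLA_2}, $\overline{\mathcal{P}}_\lambda = \mathcal{P}_\lambda$, so $\mathcal{P}(g^*) = \mathcal{P}_\lambda$. Since $g(h)$ is by construction the cost-to-go from $(0,0)$ when the placement set $\mathcal{P}(h)$ is employed, using $\mathcal{P}(g^*) = \mathcal{P}_\lambda$ produces exactly the optimal cost, $J_\lambda(0,0) = g^*$. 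Hence $g(g^*) = g^*$, so $g^*$ is a fixed point.

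For uniqueness, let $h \geq 0$ satisfy $g(h) = h$. Since $g^* = \min_{h \geq 0} g(h)$, we have $h = g(h) \geq g^*$. If the inequality were strict, i.e., $h > g^*$, then Lemma~\ref{FPI_NLOS} would give $g(h) < h$, contradicting $g(h) = h$. Therefore $h = g^*$, and the fixed point is unique.

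The main obstacle is conceptual rather than computational: it is the identification $\mathcal{P}(g^*) = \mathcal{P}_\lambda$ used in the existence step. This does not follow from the formula (\ref{NLOS_gh}) directly but rather from the OSLA optimality theorem (Theorem~\ref{optimality_OSLA_2}) combined with the indexing in (\ref{OSLA_2}). Once this identification is available, the rest of the argument is a one-line consequence of Lemma~\ref{FPI_NLOS}, and no continuity or monotonicity assumption on $g$ is required — which is fortunate, since $g$ is in fact a piecewise-constant step function of $h$ (it changes only when $h$ crosses a value that moves some lattice point into or out of $\mathcal{P}(h)$).
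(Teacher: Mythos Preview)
Your proposal is correct and follows essentially the same approach as the paper: existence of the fixed point $g^*=J_\lambda(0,0)$ via the identification $\mathcal{P}(g^*)=\overline{\mathcal{P}}_\lambda=\mathcal{P}_\lambda$ from (\ref{OSLA_2}) and Theorem~\ref{optimality_OSLA_2}, and uniqueness by combining $g(h)\ge g^*$ with Lemma~\ref{FPI_NLOS} to rule out any $h\neq g^*$. The only cosmetic difference is that the paper splits the uniqueness step into the two cases $h>g^*$ and $h<g^*$ explicitly, whereas you first observe that any fixed point must satisfy $h=g(h)\ge g^*$ and then dispatch $h>g^*$ via Lemma~\ref{FPI_NLOS}.
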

\begin{proof}
From (\ref{NLOS_Placement}) and (\ref{OSLA_2}), we observe that 
$\mathcal{P}(J_{\lambda}(0,0))=\overline{\mathcal{P}}_{\lambda}$. 
From Theorem \ref{optimality_OSLA_2}, $\overline{\mathcal{P}}_{\lambda}$ is the optimal 
placement set and thus the cost-to-go of using $\mathcal{P}(J_{\lambda}(0,0))$ is $J_{\lambda}(0,0)$, 
i.e., $g(J_{\lambda}(0,0))=J_{\lambda}(0,0)$. Hence, $J_{\lambda}(0,0)=g^*$ is a fixed 
point of $g(\cdot)$. Now, any $h>g^*$ cannot be a fixed point since, in this case, $h>g(h)$ 
from Lemma~\ref{FPI_NLOS}. On the other hand, any $h<g^*$ is such that $h<g^*\leq g(h)$ because 
$g^*$ is the optimal cost-to-go. Hence, $g^*$ is the unique fixed point of $g(\cdot)$. 
\end{proof}

We are now ready to prove the convergence property of our Algorithm.

\begin{lemma}\label{hk_nonincreasing_NLOS}
\begin{enumerate}
\item The sequence $\{h^{(k)}\}_{k\geq 1}$ (in Algorithm~\ref{Algo}) is non-increasing,
i.e., $h^{(k+1)}\leq h^{(k)}$, with the equality sign holding if and only if $h^{(k)}=g^*$.
\item The sequence ${\{\mathcal{P}^c(h^{(k)})}\}_{k\geq 1}$ is non-increasing, i.e., $\mathcal{P}^c(h^{(k+1)})\subseteq \mathcal{P}^c(h^{(k)})$, where the containment is strict whenever $\mathcal{P}^c(h^{(k+1)})\varsubsetneq{\mathcal{P}_{\lambda}}^c$.
\end{enumerate}
\end{lemma}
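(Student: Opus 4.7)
\emph{Plan.} For part~1, I would first observe that, by the very definition $g^{*}=\min_{h\geq 0}g(h)$, every iterate $h^{(k+1)}=g(h^{(k)})$ satisfies $h^{(k+1)}\geq g^{*}$, so in particular $h^{(k)}\geq g^{*}$ for every $k\geq 1$. I would then split into two cases. If $h^{(k)}>g^{*}$, Lemma~\ref{FPI_NLOS} gives $h^{(k+1)}=g(h^{(k)})<h^{(k)}$, which is strict non-increase. If $h^{(k)}=g^{*}$, Lemma~\ref{uniqueFP_NLOS} asserts that $g^{*}$ is a fixed point of $g(\cdot)$, so $h^{(k+1)}=g(g^{*})=g^{*}=h^{(k)}$. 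This simultaneously proves non-increase and the ``if'' direction of the equality statement. For the ``only if'' direction, note that $h^{(k+1)}=h^{(k)}$ is exactly the statement that $h^{(k)}$ is a fixed point of $g(\cdot)$; Lemma~\ref{uniqueFP_NLOS} identifies the unique such point as $g^{*}$, forcing $h^{(k)}=g^{*}$.

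For part~2, set-monotonicity of $h\mapsto \mathcal{P}(h)$ falls straight out of the defining formula (\ref{NLOS_Placement}): as $h$ increases the threshold $p(\lambda+h)$ increases, so fewer lattice points satisfy $p(\lambda+h)\le\Delta_q(m,n)$; hence $\mathcal{P}(h)$ is set-decreasing and $\mathcal{P}^{c}(h)$ is set-increasing in $h$. Combining this with the already-proved $h^{(k+1)}\leq h^{(k)}$ from part~1 immediately yields $\mathcal{P}^{c}(h^{(k+1)})\subseteq\mathcal{P}^{c}(h^{(k)})$.

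The strict containment is the only step that requires a separate idea. The key observation is that the formula (\ref{NLOS_gh}) for $g(h)$ depends on $h$ \emph{only} through the set $\mathcal{P}(h)$ (which in turn determines its boundary $\mathcal{B}(h)$); the scalar $h$ itself does not appear anywhere on the right-hand side of (\ref{NLOS_gh}). Consequently, if one had $\mathcal{P}(h^{(k+1)})=\mathcal{P}(h^{(k)})$, one would obtain $h^{(k+2)}=g(h^{(k+1)})=g(h^{(k)})=h^{(k+1)}$, so $h^{(k+1)}$ would be a fixed point of $g(\cdot)$ and hence equal $g^{*}$ by Lemma~\ref{uniqueFP_NLOS}; equivalently, $\mathcal{P}^{c}(h^{(k+1)})=\mathcal{P}_{\lambda}^{c}$. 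Contrapositively, whenever $\mathcal{P}^{c}(h^{(k+1)})\neq\mathcal{P}_{\lambda}^{c}$ (i.e.\ the algorithm has not yet converged at iteration $k+1$), the inclusion in part~2 must be strict.

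I expect the only delicate point to be this ``set-determination'' observation --- recognising that $g(\cdot)$ is really a functional of the placement set rather than of the scalar threshold --- since everything else is a direct appeal to Lemmas~\ref{FPI_NLOS} and~\ref{uniqueFP_NLOS} together with the trivial monotonicity of $\mathcal{P}(h)$ in $h$.
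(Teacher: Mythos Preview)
Your proposal is correct and matches the paper's argument essentially step for step: part~1 uses $h^{(k)}\ge g^{*}$ together with Lemma~\ref{FPI_NLOS} for strict decrease and Lemma~\ref{uniqueFP_NLOS} for the equality characterization, and part~2 combines the monotonicity of $\mathcal{P}^{c}(h)$ in $h$ with the observation that $g(\cdot)$ depends on $h$ only through $\mathcal{P}(h)$ to force $h^{(k+1)}=g^{*}$ whenever the complement sets stabilize. The only cosmetic difference is that you state the ``set-determination'' point explicitly, whereas the paper invokes it implicitly in the line $g(h^{(k+1)})=g(h^{(k)})$.
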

\begin{proof}
1) Note first that $h^{(k)}\geq g^*$ for $k\geq 1$ because $h^{(k)}=g(h^{(k-1)})\geq g^*$. Then, for $k \geq 1$, we have either $h^{(k)} = g^*$ or  $h^{(k)} > g^*$. In the first case $h^{(k+1)} = g(h^{(k)})=g(g^*)=g^*=h^{(k)}$ and we can stop, whereas in the second case, from Lemma \ref{FPI_NLOS} we have  $h^{(k+1)} = g(h^{(k)}) < h^{(k)}$. 

2) From (\ref{NLOS_Placement}), $h_2>h_1$ implies $\mathcal{P}^c(h_1)\subseteq  \mathcal{P}^c(h_2)$. 
Hence, as $\{h^{(k)}\}_{k\geq 1}$ is non-increasing (from Part~1)), 
${\{\mathcal{P}^c(h^{(k)})}\}_{k\geq 1}$ is also non-increasing. 

Suppose $\mathcal{P}^c(h^{(k+1)})=\mathcal{P}^c(h^{(k)})$ then $g(h^{(k+1)})=g(h^{(k)})=h^{(k+1)}$
(second equality is by the definition of $\{h^{(k)}\}$), which implies $h^{(k+1)}=g^*$ (since $g(\cdot)$ has a unique fixed point, see Lemma~\ref{uniqueFP_NLOS}). Thus, $\mathcal{P}^c(h^{(k+1)})={\mathcal{P}_{\lambda}}^c$.
\end{proof}

\begin{theorem}
Algorithm~\ref{Algo} returns $g^*$ and ${\mathcal{P}_{\lambda}}^c$ in a finite number of steps. 
\end{theorem}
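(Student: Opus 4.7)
The plan is to combine the monotonicity result of Lemma~\ref{hk_nonincreasing_NLOS} with a finiteness property of the sets $\mathcal{P}^c(h)$ to preclude an infinite run of the algorithm.

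The preliminary observation, which I would state and prove first, is that for every $h \geq 0$ the complement $\mathcal{P}^c(h) = \{(m,n) \in \mathbb{Z}_+^2 : \Delta_q(m,n) < p(\lambda+h)\}$ is a \emph{finite} subset of $\mathbb{Z}_+^2$. This is where condition \textbf{C3} becomes essential: it forces the unit increments $\Delta_1(m,n)$ and $\Delta_2(m,n)$ to grow without bound, so that $\Delta_q(m,n) = q\Delta_1(m,n) + (1-q)\Delta_2(m,n) \to \infty$ as $m+n \to \infty$. Only finitely many lattice points can therefore lie below the fixed threshold $p(\lambda+h)$.

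With this in hand, I would argue by contradiction. Suppose Algorithm~\ref{Algo} never terminates. Then at every iteration $k$ the check $g(h^{(k)}) = h^{(k)}$ fails, and since $g^*$ is the unique fixed point of $g(\cdot)$ by Lemma~\ref{uniqueFP_NLOS}, this forces $h^{(k)} \neq g^*$ for every $k \geq 0$, equivalently $\mathcal{P}^c(h^{(k+1)}) \neq \mathcal{P}_\lambda^c$ (reading the contrapositive of the argument used in the proof of Lemma~\ref{hk_nonincreasing_NLOS}, Part~2, which shows that equality of consecutive sets forces $h^{(k+1)}=g^*$). Part~2 of the same lemma then upgrades the nesting $\mathcal{P}^c(h^{(k+1)}) \subseteq \mathcal{P}^c(h^{(k)})$ to a strict containment at every step, producing an infinite strictly descending chain of subsets of the finite set $\mathcal{P}^c(h^{(0)})$, which is impossible. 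Therefore there exists a finite $K$ at which $g(h^{(K)}) = h^{(K)}$; by the uniqueness of the fixed point this means $h^{(K)} = g^*$, so the algorithm breaks and returns $g(h^{(K)}) = g^* = J_\lambda(0,0)$ together with $\mathcal{P}(h^{(K)}) = \mathcal{P}(g^*) = \mathcal{P}_\lambda$.

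The main obstacle is the finiteness claim for $\mathcal{P}^c(h)$; without it the strictly descending chain of sets need not have bounded length. Once condition \textbf{C3} is converted, via the monotonicity result of Lemma~\ref{cor1} and the definition of $\Delta_q$, into unbounded growth of $\Delta_q(m,n)$ over $\mathbb{Z}_+^2$, the remainder is the standard observation that no finite set admits an infinite strictly descending chain of subsets.
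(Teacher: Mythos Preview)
Your argument is correct and mirrors the paper's: both rely on the finiteness of the complement sets together with the strict-descent part of Lemma~\ref{hk_nonincreasing_NLOS} to bound the number of iterations. One small slip to clean up: since Lemma~\ref{hk_nonincreasing_NLOS} is stated only for $k\geq 1$ (indeed $h^{(0)}=0\leq g^*\leq g(h^{(0)})=h^{(1)}$, so the first step can go \emph{up}), the strictly descending chain sits inside $\mathcal{P}^c(h^{(1)})$, not $\mathcal{P}^c(h^{(0)})$---but your own preliminary observation already gives finiteness of $\mathcal{P}^c(h^{(1)})$, so the repair is purely cosmetic.
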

\begin{proof}
Noting that $h^{(1)}=g(h^{(0)})\geq g^*$ and using (\ref{NLOS_Placement}), we have 
${\mathcal{P}_{\lambda}}^c \subseteq \mathcal{P}^c(h^{(1)})$. Either ${\mathcal{P}_{\lambda}}^c = \mathcal{P}^c(h^{(1)})$, in which case the algorithm stops. Otherwise, note that both sets, ${\mathcal{P}_{\lambda}}^c$ and $\mathcal{P}^c(h^{(1)})$ contain a finite number of lattice points (from the definition of $\mathcal{P}(h)$ in (\ref{NLOS_Placement})). Using Lemma~\ref{hk_nonincreasing_NLOS}, $\mathcal{P}^c(h^{(k)})$ converges to ${\mathcal{P}_{\lambda}}^c$ in at most $|\mathcal{P}^c(h^{(1)})-{\mathcal{P}_{\lambda}}^c|<\infty$ iterations. Once $\mathcal{P}^c(h^{(k)})$ converges to ${\mathcal{P}_{\lambda}}$, the algorithm stops and returns the optimal cost-to-go $g^*$.
\end{proof}

\section{Solving the Constrained Problem} 
\label{EN_lambda}
In this section, we devise a method to solve the constrained problem in (\ref{eq:main}) using the 
solution of the unconstrained problem (\ref{eq:modified}) provided by Algorithm~\ref{Algo}. 
This method is applied in Section~\ref{distance_heuristic_section} where, imposing a constraint
on the average number of relays, we compare the performance of
a distance based heuristic with the optimal.

We begin with the following standard result which relates the solutions
of the problems in (\ref{eq:modified}) and (\ref{eq:main}).
\begin{lemma}
\label{relation_lemma}
Let $\pi_{\lambda}^*\in \Pi$ be an optimal policy for the unconstrained problem in
(\ref{eq:modified}) such that $\mathbb{E}_{\pi_{\lambda}^*}N=\rho_{avg}$. 
Then $\pi_{\lambda}^*$ is also optimal for the constrained problem in
(\ref{eq:main}).
\end{lemma}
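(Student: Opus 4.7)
The plan is to use a standard Lagrangian relaxation argument. First I would verify feasibility of $\pi_\lambda^*$ for the constrained problem: by hypothesis $\mathbb{E}_{\pi_\lambda^*}N = \rho_{avg}$, so the constraint $\mathbb{E}_\pi N \leq \rho_{avg}$ is met (with equality). Hence $\pi_\lambda^*$ lies in the feasible set of (\ref{eq:main}).

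Next, let $\pi \in \Pi$ be any other feasible policy, i.e., $\mathbb{E}_\pi N \leq \rho_{avg}$. The optimality of $\pi_\lambda^*$ for the unconstrained problem (\ref{eq:modified}) gives
\begin{eqnarray*}
\mathbb{E}_{\pi_\lambda^*}C + \lambda \mathbb{E}_{\pi_\lambda^*}N \;\leq\; \mathbb{E}_\pi C + \lambda \mathbb{E}_\pi N.
\end{eqnarray*}
Substituting $\mathbb{E}_{\pi_\lambda^*}N = \rho_{avg}$ on the left and using $\lambda>0$ together with $\mathbb{E}_\pi N \leq \rho_{avg}$ to bound the right-hand side yields
\begin{eqnarray*}
\mathbb{E}_{\pi_\lambda^*}C + \lambda \rho_{avg} \;\leq\; \mathbb{E}_\pi C + \lambda \mathbb{E}_\pi N \;\leq\; \mathbb{E}_\pi C + \lambda \rho_{avg}.
\end{eqnarray*}
Cancelling the common term $\lambda \rho_{avg}$ gives $\mathbb{E}_{\pi_\lambda^*}C \leq \mathbb{E}_\pi C$, and since $\pi$ was an arbitrary feasible policy, $\pi_\lambda^*$ is optimal for (\ref{eq:main}).

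There is no real obstacle here; the result is essentially a one-line consequence of weak Lagrangian duality, and the only subtlety is that the hypothesis $\mathbb{E}_{\pi_\lambda^*}N = \rho_{avg}$ (rather than just $\leq \rho_{avg}$) is what lets us replace $\mathbb{E}_{\pi_\lambda^*}N$ with $\rho_{avg}$ so as to absorb the Lagrangian penalty on both sides. This hypothesis also guarantees feasibility, closing the argument.
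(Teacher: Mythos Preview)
Your argument is correct and is precisely the standard Lagrangian relaxation (weak duality) argument. The paper in fact does not give a proof of this lemma at all; it simply introduces it as ``the following standard result,'' so your write-up supplies exactly the omitted reasoning the authors had in mind.
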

However, the above lemma is useful only when we are able to exhibit a $\lambda$ such that
$\mathbb{E}_{\pi_{\lambda}^*}N=\rho_{avg}$. The subsequent development in this section
is towards obtaining the solution to the more general case.

The expected number of relays used by the optimal policy, $\pi_{\lambda}^*$, which uses the optimal 
placement set $\mathcal{P}_{\lambda}$, can be computed as:
\begin{eqnarray}
\mathbb{E}_{\pi_{\lambda}^*}N= \frac{\sum_{(m,n)\in \mathcal{B}_{\lambda}}\mathbb{P}((m,n),\mathsf{c})}{1-\sum_{(m,n)\in \mathcal{B}_{\lambda}}\mathbb{P}((m,n),\mathsf{c})},
\end{eqnarray}
where $\mathbb{P}((m,n),\mathsf{c})$ is the reaching probability corresponding to $\mathcal{P}_\lambda$ 
and $\mathcal{B}_{\lambda}$ is the boundary of $\mathcal{P}_\lambda$.
A plot of $\mathbb{E}_{\pi_{\lambda}^*}N$ vs.\ $\lambda$ is given in Fig.~\ref{EN_EC_lambda_figure}. 
We make the following observations about $\mathbb{E}_{\pi_{\lambda}^*} N$.

1) $\mathbb{E}_{\pi_{\lambda}^*} N$ decreases with $\lambda$; this is as expected, since as 
each relay becomes ``costlier'' fewer relays are used on the average.

2) Even when $\lambda = 0$, $\mathbb{E}_{\pi_{\lambda}^*} N$ is finite. This is because 
$d(0)> 0$, i.e., there is a positive cost for a $0$ length link. Define the value of 
$\mathbb{E}_{\pi_{\lambda}^*} N$  with $\lambda = 0$ to be $\rho_{\max}$.

3) $\mathbb{E}_{\pi_{\lambda}^*} N$ vs. $\lambda$ is a piecewise constant function. This occurs 
because the relay placement positions are discrete. For a range of values of $\lambda$ the 
same threshold is optimal. This structure is also evident from the results based on the 
optimal stopping formulation and the OSLA rule in Section~\ref{OSLA_formulation_section}. It 
follows that for a value of $\lambda$ at which there is a step in the plot, there are two 
optimal deterministic policies, $\underline{\pi}$ and $\overline{\pi}$, for the relaxed 
problem. Let $\underline{\rho}=\mathbb{E}_{\underline{\pi}} N$ and 
$\overline{\rho}=\mathbb{E}_{\overline{\pi}} N$.

We have the following structure of the optimal policy for the constrained problem:
\begin{theorem}
\label{constrained_solution_theorem}
\begin{enumerate}
	\item For $\rho_{avg} \geq \rho_{max}$ the optimal placement set is obtained for $\lambda=0$,
 i.e., is $\mathcal{P}_0$.
	 \item For $\rho_{avg} < \rho_{max}$, if there is a $\lambda$ such that (a) 
$\mathbb{E}_{\pi_{\lambda}^*} N=\rho_{avg}$ then the optimal policy is $\pi_{\lambda}^*$, or 
(b) $\underline{\rho}<\rho_{avg}<\overline{\rho}$ then the optimal policy is obtained by 
mixing $\underline{\pi}$ and $\overline{\pi}$.
\end{enumerate}
\end{theorem}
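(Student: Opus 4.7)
The plan is to treat the constrained problem in (\ref{eq:main}) via a Lagrangian approach, with Lemma~\ref{relation_lemma} as the main workhorse, and to use a randomization argument to fill in the gaps between discrete values that $\mathbb{E}_{\pi_{\lambda}^*}N$ can take. Recall that for each $\lambda\ge 0$, $\pi_{\lambda}^*$ minimizes the Lagrangian $L_\lambda(\pi):=\mathbb{E}_\pi C+\lambda\mathbb{E}_\pi N$, so for any admissible $\pi$, $L_\lambda(\pi)\geq L_\lambda(\pi_{\lambda}^*)$, i.e.,
\begin{equation*}
\mathbb{E}_\pi C\geq \mathbb{E}_{\pi_{\lambda}^*}C+\lambda\bigl(\mathbb{E}_{\pi_{\lambda}^*}N-\mathbb{E}_\pi N\bigr).
\end{equation*}
This inequality will be applied repeatedly.

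For Part~1, I would argue that when $\rho_{avg}\geq \rho_{\max}$, the constraint is inactive at the $\lambda=0$ solution: by definition $\mathbb{E}_{\pi_0^*}N=\rho_{\max}\leq\rho_{avg}$, so $\pi_0^*$ is feasible for (\ref{eq:main}). Moreover, $\pi_0^*$ minimizes $L_0(\pi)=\mathbb{E}_\pi C$ over all of $\Pi$, hence in particular over the feasible set of (\ref{eq:main}), which yields the claim. Part~2(a) is immediate from Lemma~\ref{relation_lemma}: if some $\lambda$ produces $\mathbb{E}_{\pi_{\lambda}^*}N=\rho_{avg}$, then $\pi_{\lambda}^*$ is feasible, and any other feasible $\pi$ satisfies $\mathbb{E}_\pi C\geq \mathbb{E}_{\pi_{\lambda}^*}C+\lambda(\rho_{avg}-\mathbb{E}_\pi N)\geq \mathbb{E}_{\pi_{\lambda}^*}C$.

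The main technical content, and the expected obstacle, lies in Part~2(b). Because $\mathbb{E}_{\pi_{\lambda}^*}N$ is piecewise constant in $\lambda$ (as observed in Section~\ref{EN_lambda}), the range $(\underline{\rho},\overline{\rho})$ is skipped entirely as $\lambda$ varies, so no pure deterministic $\pi_{\lambda}^*$ can attain the constraint value $\rho_{avg}$ in that open interval. To bridge this gap I would introduce a randomized policy $\pi_\alpha$ that plays $\overline{\pi}$ with probability $\alpha$ and $\underline{\pi}$ with probability $1-\alpha$ (the coin flipped once, before deployment begins), choosing $\alpha\in(0,1)$ so that $\alpha\overline{\rho}+(1-\alpha)\underline{\rho}=\rho_{avg}$; by linearity of expectation $\mathbb{E}_{\pi_\alpha}N=\rho_{avg}$ and $\mathbb{E}_{\pi_\alpha}C=\alpha\mathbb{E}_{\overline{\pi}}C+(1-\alpha)\mathbb{E}_{\underline{\pi}}C$, so $\pi_\alpha$ is feasible.

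To prove optimality of $\pi_\alpha$, I would use that at the step value $\lambda$, both $\underline{\pi}$ and $\overline{\pi}$ are minimizers of the Lagrangian, so they achieve a common value $L_\lambda^*:=\mathbb{E}_{\underline{\pi}}C+\lambda\underline{\rho}=\mathbb{E}_{\overline{\pi}}C+\lambda\overline{\rho}$. Then
\begin{equation*}
\mathbb{E}_{\pi_\alpha}C+\lambda\rho_{avg}=\alpha(\mathbb{E}_{\overline{\pi}}C+\lambda\overline{\rho})+(1-\alpha)(\mathbb{E}_{\underline{\pi}}C+\lambda\underline{\rho})=L_\lambda^*,
\end{equation*}
so $\pi_\alpha$ is itself a Lagrangian minimizer. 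For any feasible $\pi$ (i.e., $\mathbb{E}_\pi N\leq \rho_{avg}$), the Lagrangian bound then gives $\mathbb{E}_\pi C\geq L_\lambda^*-\lambda\mathbb{E}_\pi N\geq L_\lambda^*-\lambda\rho_{avg}=\mathbb{E}_{\pi_\alpha}C$, establishing optimality of $\pi_\alpha$. The one subtlety to verify carefully is that such randomized policies lie in $\Pi$ as defined in Section~\ref{system_model_section}; since the set of mappings $(\mu_k:k\geq 0)$ already permits randomization at each step, a one-shot mixture of two deterministic strategies is a legitimate element of $\Pi$, and this closes the argument.
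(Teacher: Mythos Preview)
Your proposal is correct and follows essentially the same approach as the paper: the paper also dismisses Part~1 as straightforward, invokes Lemma~\ref{relation_lemma} for Part~2(a), and for Part~2(b) constructs the same initial-coin-flip mixture of $\underline{\pi}$ and $\overline{\pi}$, shows that this mixed policy attains the minimal Lagrangian value $\mathbb{E}_\pi C+\lambda\mathbb{E}_\pi N$, and then appeals to Lemma~\ref{relation_lemma} with $\mathbb{E}_{\pi_m}N=\rho_{avg}$. Your write-up is in fact slightly more explicit (you spell out the Lagrangian inequality for Parts~1 and~2(a), and you address whether the mixed policy lies in $\Pi$), but the underlying argument is identical.
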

\begin{proof}
  1) is straight forward. For proof of 2)-(a), see
  Lemma~\ref{relation_lemma}. Considering now 2)-(b), define
  $0<\alpha<1$ such that
  $(1-\alpha)\underline{\rho}+\alpha\bar{\rho}=\rho_{avg}$.  We obtain
  a mixing policy $\pi_{m}$ by choosing $\underline{\pi}$ w.p.\
  $1-\alpha$ and $\bar{\pi}$ w.p.\ $\alpha$ at the beginning of the
  deployment. For any policy $\pi$ we have the following standard
  argument:
\begin{eqnarray}
\lefteqn{\mathbb{E}_{\pi_m}C+\lambda\mathbb{E}_{\pi_m}N}\nonumber\\
&=&(1-\alpha)
(\mathbb{E}_{\underline{\pi}}C+\lambda\underline{\rho})+\alpha(\mathbb{E}_{\bar{\pi}}C
+\lambda\bar{\rho})\nonumber\\
&\le& (1-\alpha)
(\mathbb{E}_{\pi}C+\lambda\mathbb{E}_{\pi}N)+\alpha(\mathbb{E}_{{\pi}}C+\lambda
\mathbb{E}_{\pi}N)\nonumber\\
&=&\mathbb{E}_{{\pi}}C+\lambda\mathbb{E}_{\pi}N.
\end{eqnarray}
The inequality is because $\underline{\pi}$ and $\overline{\pi}$ are both optimal for the 
problem (\ref{eq:modified}) with relay price $\lambda$.
Thus, we have shown that $\pi_m$ is also optimal for the relaxed problem. Using this along 
with $\mathbb{E}_{\pi_m}N=\rho_{avg}$ in Lemma~\ref{relation_lemma}, we conclude the proof.
\end{proof}

\section{Numerical Work}
\label{numerical_work_section}
For our numerical work we use the one-hop power function
$d(r)=P_m+\gamma r^{\eta}$, with $P_m=0.1$, $\gamma=0.01$.
 We first study the effect of parameter variation
on the various costs. Next, we compare the performance of 
a distance based heuristic with the optimal.

\begin{figure}[t!]
\centering
\includegraphics[width=0.6\linewidth]{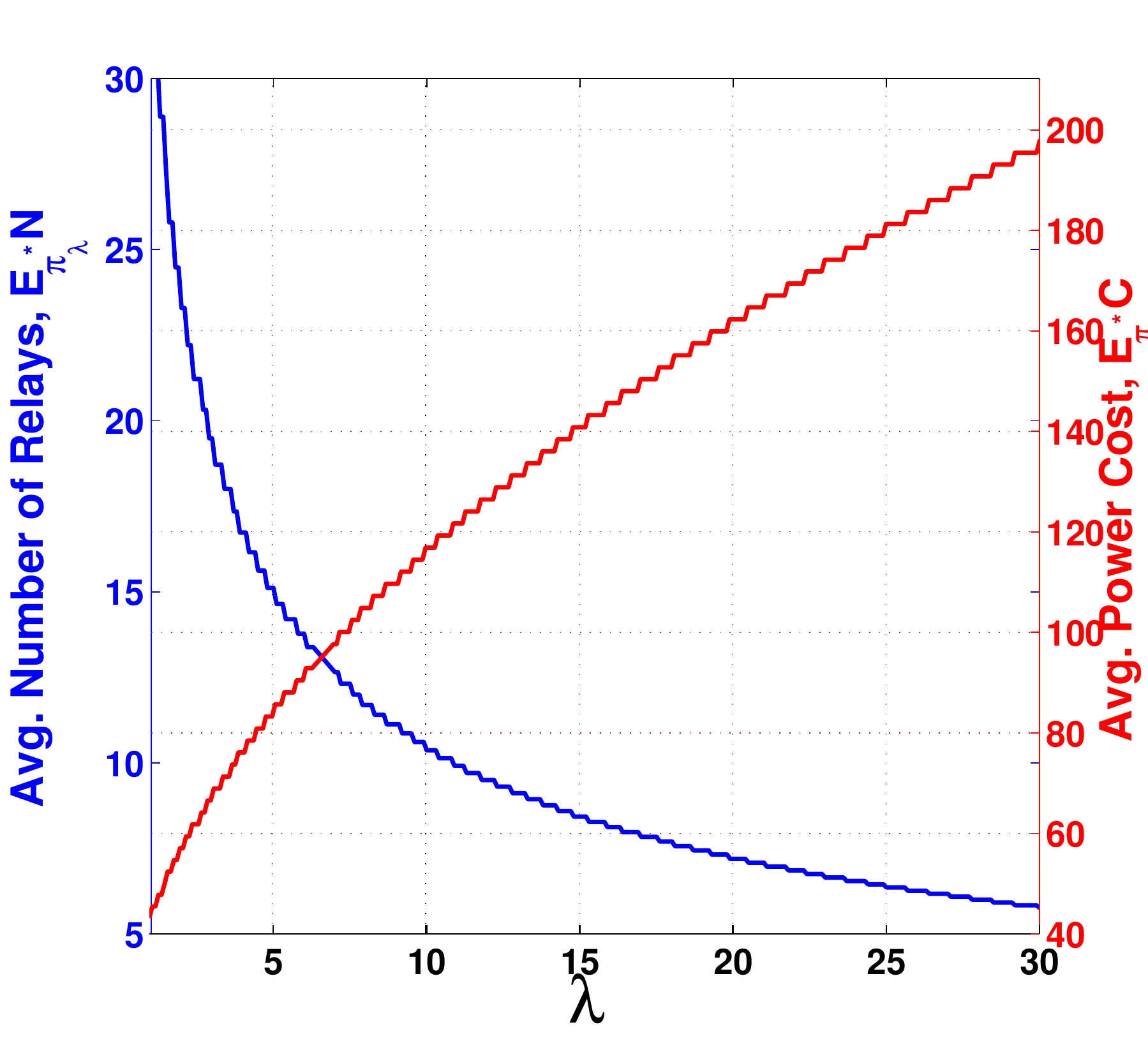}
\caption{Average number of relays $\mathbb{E}_{\pi_{\lambda}^*} N$ (left) and average power cost $\mathbb{E}_{\pi_{\lambda}^*} C$ (right) as a function of $\lambda$ ($p=0.002$, $q=0.5$ and $\eta=2$).}
\label{EN_EC_lambda_figure}
\end{figure}

\begin{figure}[t]
\centering
\includegraphics[width=0.6\linewidth]{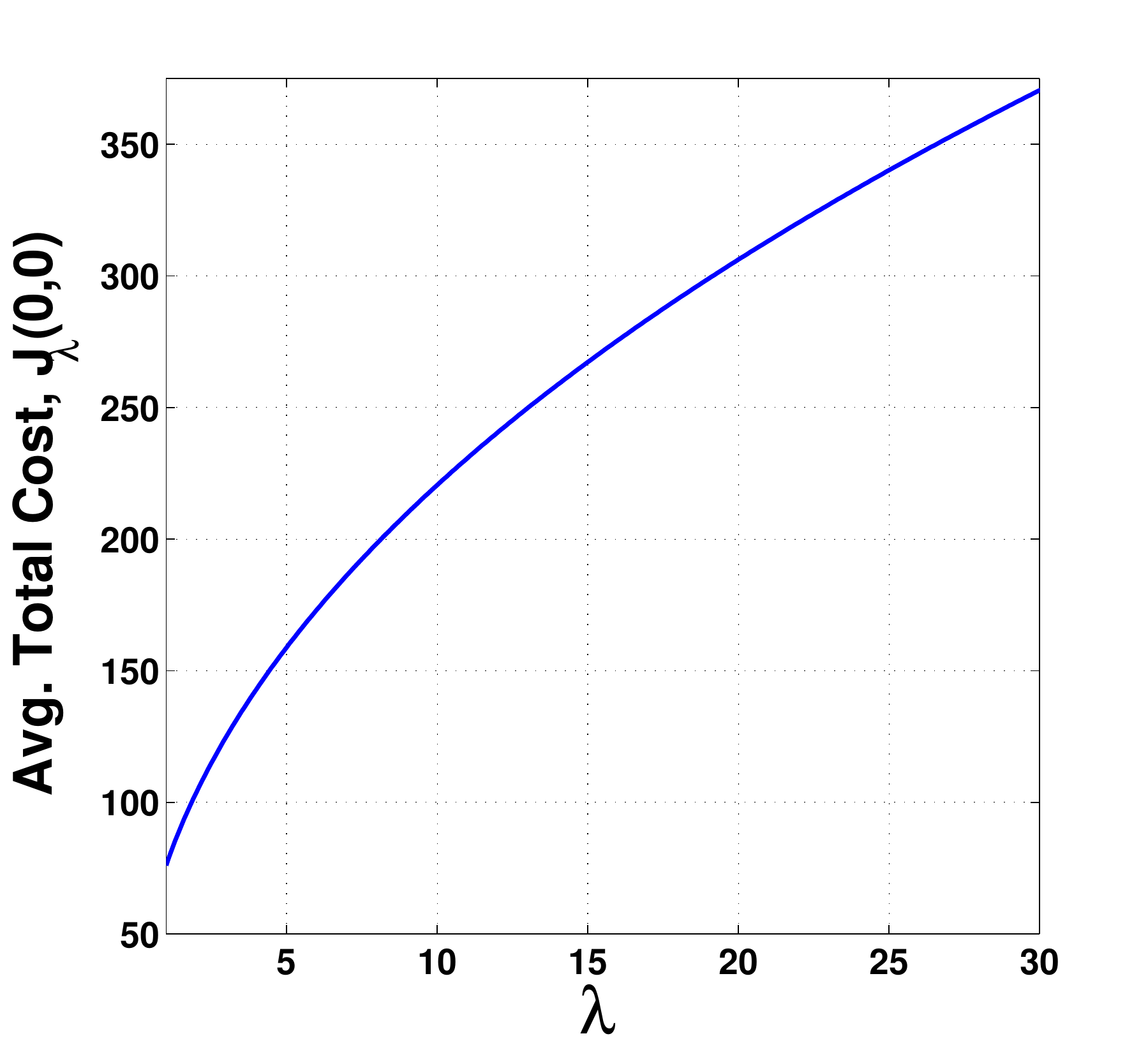}
\caption{Average total cost $J_{\lambda}(0,0)$ as a function of $\lambda$ ($p=0.002$, $q=0.5$ and $\eta=2$).}
\label{J00_lambda_figure}
\end{figure}

\subsection{Effect of Parameter Variation}
In Fig.~\ref{Boundary_figure}, we have already shown an optimal placement
boundary for $p = 0.002$, $q = 0.5$, and $\eta=3$. Since $q=0.5$ the
boundary is symmetric about the $m=n$ line.

In Fig.~\ref{EN_EC_lambda_figure}, we plot $E_{\pi_\lambda^*}N$ and
$E_{\pi_\lambda^*}C$ vs.\ $\lambda$. The plot of $J_\lambda(0,0)$ vs.\
$\lambda$ is in Fig.~\ref{J00_lambda_figure}.  These plots are for
$p=0.002$ and $q=0.5$. Since $\lambda$ is the cost per relay, as
expected, $E_{\pi_\lambda^*}N$ decreases as $\lambda$ increases. We
observe that $E_{\pi_\lambda^*} C$ and the optimal total cost
$J_\lambda(0,0)$ increase as $\lambda$ increases. A close examination
of Fig.~\ref{EN_EC_lambda_figure} reveals that both the plots are step
functions.  This is due to the discrete placement at lattice points,
which results in the same placement boundary being optimal for a range
of $\lambda$ values. Thus, as seen in Section~\ref{EN_lambda}, at the
$\lambda$ values, where there is jump in $E_{\pi_\lambda^*} N$, a
random mixture of two policies is needed.

Fig.~\ref{J00_q_figure} shows the variation of the total optimal cost
$J_\lambda(0,0)$ with $q$. The variation is symmetric about
$q=0.5$. For a given probability $p$ of the path ending, $q = 0.5$
results in the path folding frequently. In such a case, since NLOS
propagation is permitted, and the path-loss is isotropic, fewer
relays are required to be placed. On the other hand, when $q$ is close
to $0$ or to $1$ the path takes fewer turns and more relays are
needed, leading to larger values of the total cost.

In Fig.~\ref{boundary_variation_figure} we show the variation of optimal boundaries with 
$\eta$. As $\eta,$ the path-loss exponent, increases the hop cost increases for a given hop 
distance. This results in relays needing to be placed more frequently. As can be seen the 
placement boundaries shrink with increasing $\eta$. We also notice that the placement boundary 
for $\eta = 2$ is a straight line; indeed this provable result holds for $\eta=2$ for any 
values of $p$ and $q$. 

\begin{figure}[t!]
\centering
\includegraphics[width=0.6\linewidth]{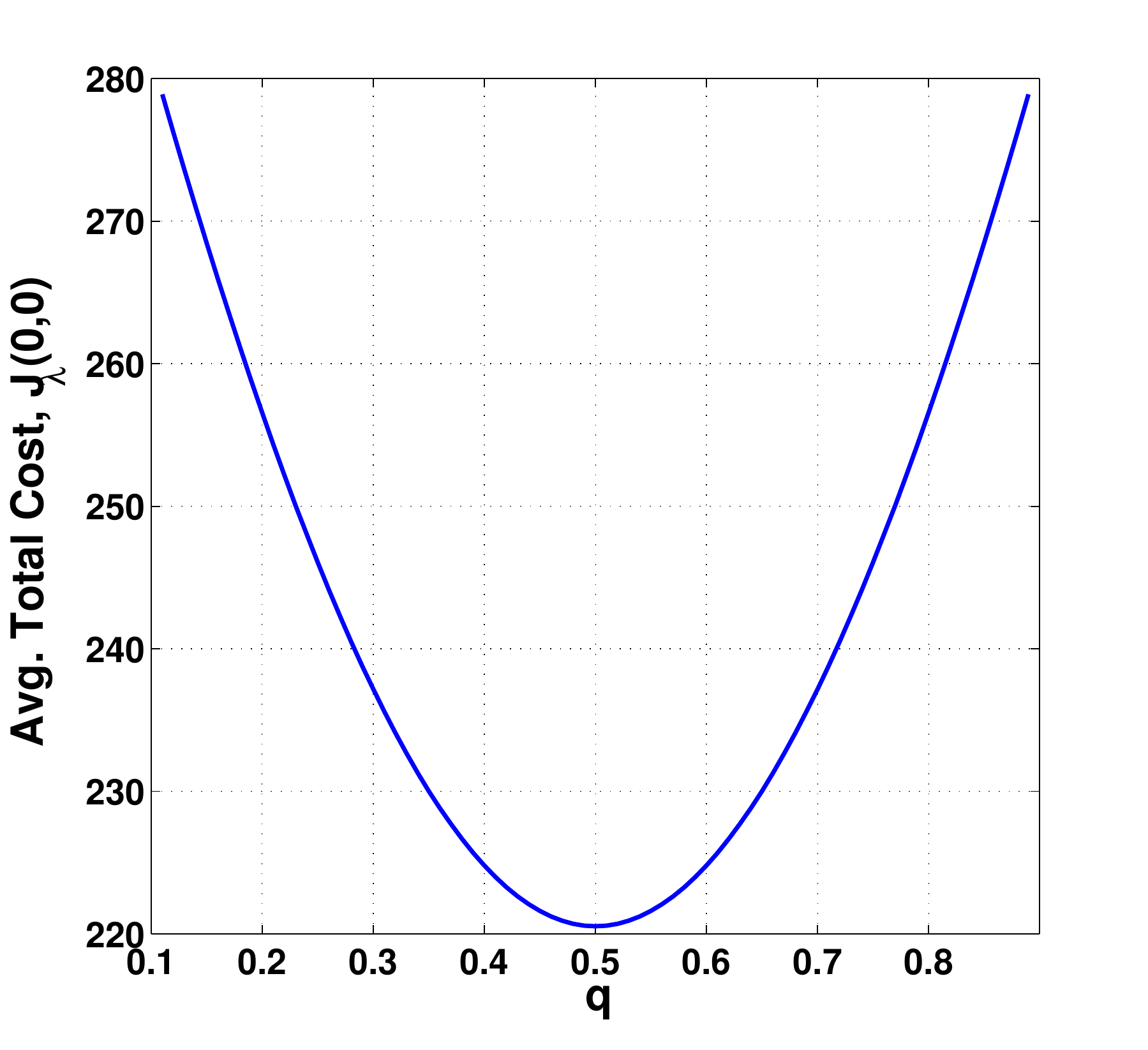}
\caption{Average total cost $J_{\lambda}(0,0)$ as a function of $q$ ($p=0.002$ and $\eta=2$).}
\label{J00_q_figure}
\end{figure}

\begin{figure}[t]
\centering
\includegraphics[width=0.6\linewidth]{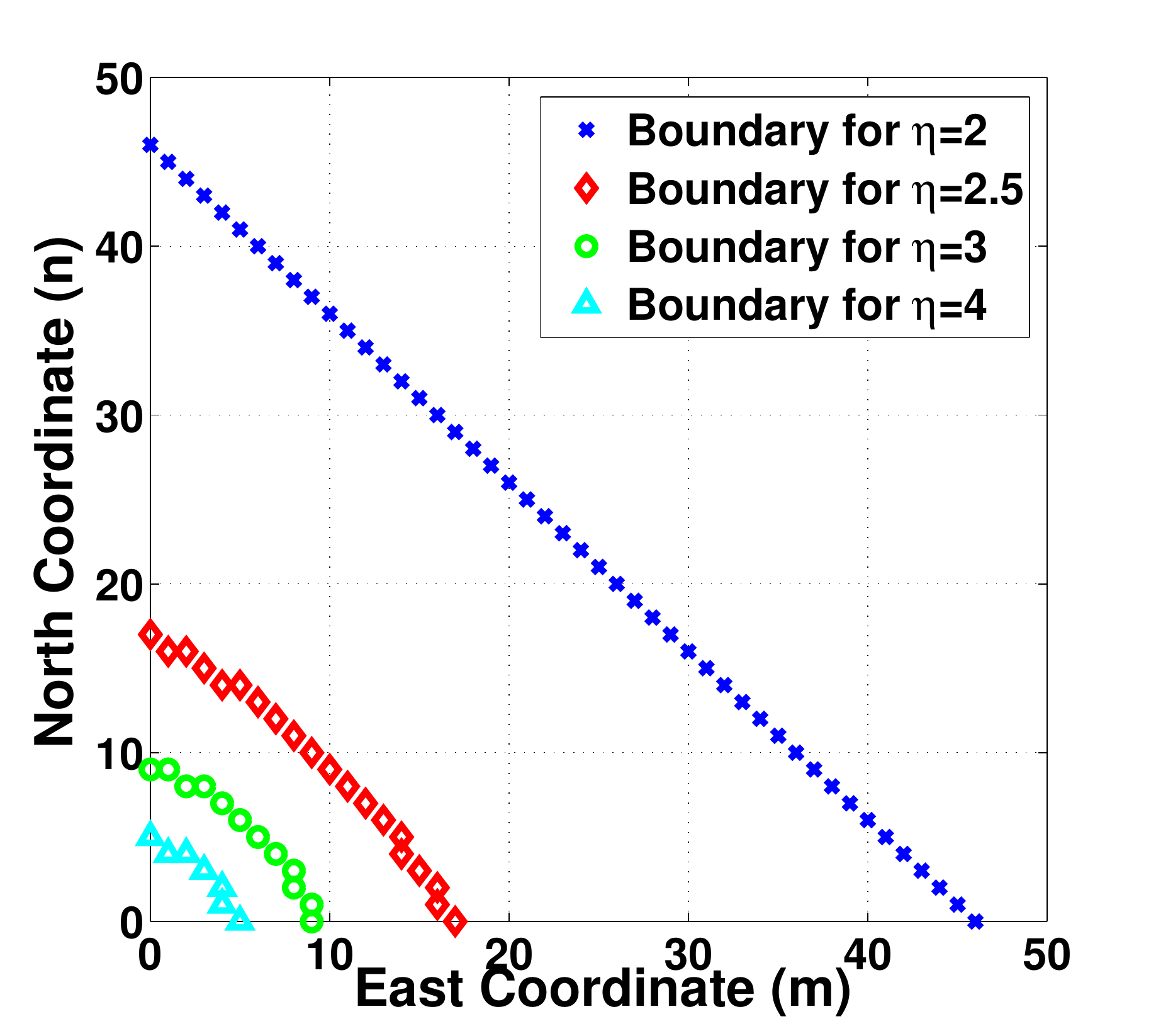}
\caption{Boundaries for various values of the path-loss exponent $\eta$ ($p=0.002$, $q=0.5$).}
\label{boundary_variation_figure}
\end{figure}

\subsection{Comparison with the Distance based Heuristic}
\label{distance_heuristic_section}
We recall from the literature survey in Section~\ref{intro} that prior work invariably proposed 
the policy of placing a relay after the RF signal strength from the previous relay dropped 
below a threshold. For isotropic propagation (as we have assumed in this paper), this is 
equivalent to placing the relay after a circular boundary is crossed. With this in mind, we 
obtained the \emph{optimal constant distance placement policy} (called \emph{the heuristic} 
hereafter) numerically in a manner similar to what is described in 
Section~\ref{calculation_cost_section}. A sample result is provided in 
Fig.~\ref{boundary_comparisons_figure}, for the parameters $p=0.002$, $q=0.5$ and $\eta=2$. We 
observe that if the path were to evolve roughly Eastward or Northward then the heuristic will 
result in many more relays being placed. On the other hand, if the path evolves diagonally 
(which has higher probability) then the two placement boundaries will result in similar 
placement decisions.

\begin{figure}[t!]
\centering
\includegraphics[width=0.6\linewidth]{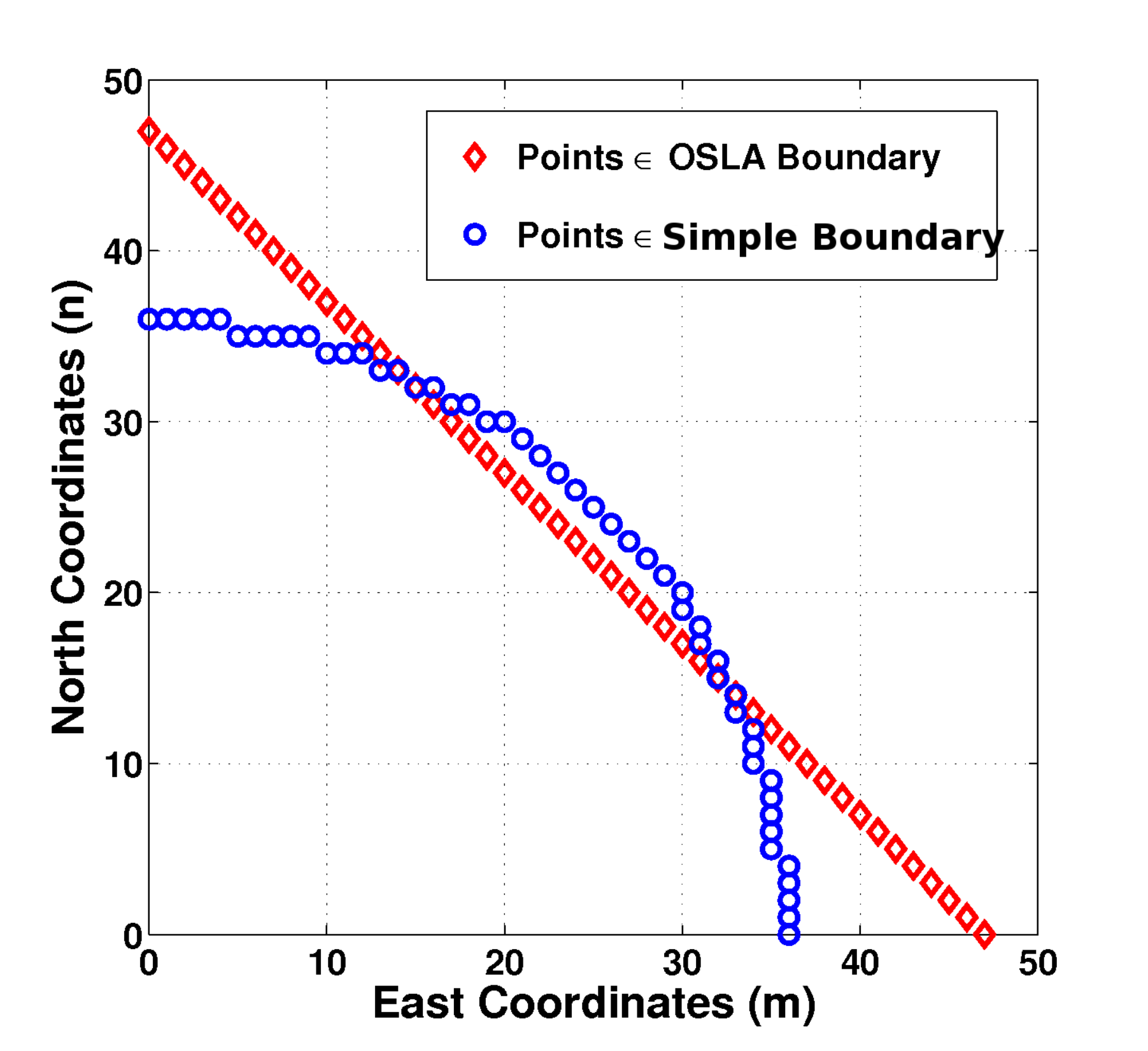}
\caption{Boundary of the optimal placement set (OSLA boundary) and boundary derived from the 
heuristic policy ($p=0.002$, $q=0.5$ and $\eta=2$). }
\label{boundary_comparisons_figure}
\vspace{-4mm}
\end{figure}

\begin{figure}[t]
\centering
\includegraphics[width=0.6\linewidth]{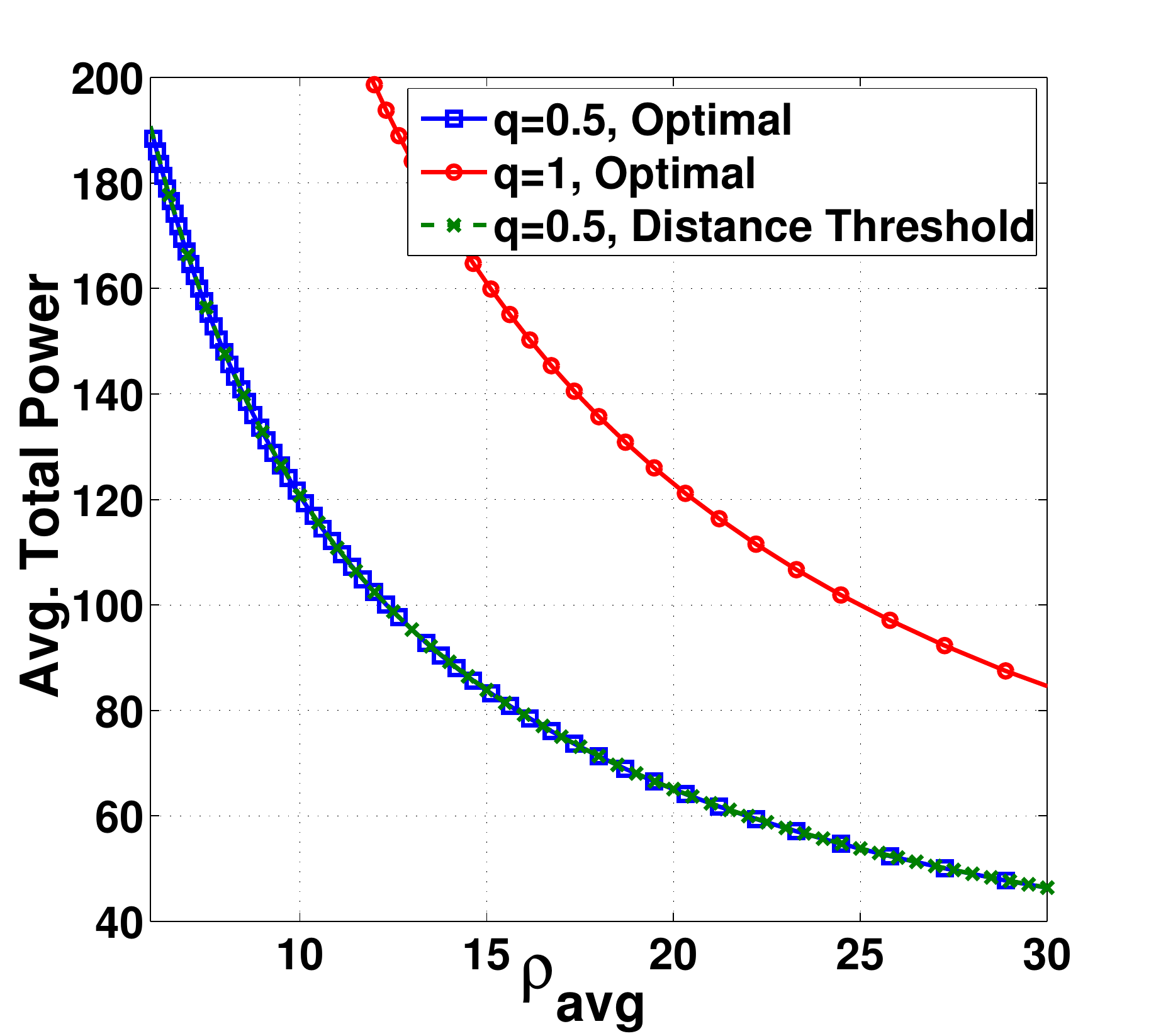}
\caption{Average total power as a function of $\rho$ for the optimal policy ($q=0.5$ and $q=1$, 
which corresponds to the straight line) and for the heuristic ($q=0.5$) for $p=0.002$ and 
$\eta=2$.}
\label{EC_rho_comparison_figure}
\vspace{-6mm}
\end{figure}

This observation shows up in Fig.~\ref{EC_rho_comparison_figure}, where we show the cost 
incurred by the optimal policy (for $q=0.5$ and for $q=1$, which corresponds to a straight line 
corridor) and the heuristic ($q=0.5$) vs.\ $\rho$ for the constrained problem. As expected, the 
cost is much larger for $q=1$ since the path does not fold. We find that for $q=0.5$ the 
optimal placement boundary and the heuristic provide costs that are almost indistinguishable at 
this scale. We have performed simulations by varying the system parameters and observed the 
same good performance of the optimal constant distance placement policy. This suggests that the 
heuristic policy performs well provided that the threshold distance is optimally chosen with 
respect to the system parameters. 

\section{Conclusion}
We considered the problem of placing relays on a random lattice path
to optimize a linear combination of average power cost and average
number of relays deployed. The optimal placement policy was proved to
be of threshold nature (Theorem~\ref{placement_boundary}).  We further
proved the optimality of the OSLA rule (in
Theorem~\ref{optimality_OSLA_2}).  We have also devised an OLSA based
fixed point iteration algorithm (Algorithm~\ref{Algo}), which we have
proved to converge to the optimal placement set in a finite number of
steps.  Through numerical work we observed that the performance (in
terms of average power incurred for a given relay constraint) of the
optimal policy is closed to that of the distance threshold policy
provided that the threshold distance is optimally chosen with respect
to the system parameters.

\bibliographystyle{IEEEtran}
\bibliography{IEEEabrv,BibFile-Infocom}

\begin{thebibliography}{10}
\providecommand{\url}[1]{#1}
\csname url@samestyle\endcsname
\providecommand{\newblock}{\relax}
\providecommand{\bibinfo}[2]{#2}
\providecommand{\BIBentrySTDinterwordspacing}{\spaceskip=0pt\relax}
\providecommand{\BIBentryALTinterwordstretchfactor}{4}
\providecommand{\BIBentryALTinterwordspacing}{\spaceskip=\fontdimen2\font plus
\BIBentryALTinterwordstretchfactor\fontdimen3\font minus
  \fontdimen4\font\relax}
\providecommand{\BIBforeignlanguage}[2]{{%
\expandafter\ifx\csname l@#1\endcsname\relax
\typeout{** WARNING: IEEEtran.bst: No hyphenation pattern has been}%
\typeout{** loaded for the language `#1'. Using the pattern for}%
\typeout{** the default language instead.}%
\else
\language=\csname l@#1\endcsname
\fi
#2}}
\providecommand{\BIBdecl}{\relax}
\BIBdecl

\bibitem{Fischer}
C.~Fischer and H.~Gellersen, ``{Location and Navigation Support for Emergency
  Responders: A Survey},'' \emph{IEEE Pervasive Computing}, vol.~9, no.~1, pp.
  38--49, {Jan.-Mar.} 2010.

\bibitem{howard-etal02incremental-self-deployment-algorithm}
A.~Howard, M.~J. Matari\'{c}, and S.~Sukhat~Gaurav, ``An {I}ncremental
  {S}elf-{D}eployment {A}lgorithm for {M}obile {S}ensor {N}etworks,''
  \emph{Kluwer Autonomous Robots}, vol.~13, no.~2, pp. 113--126, {Sept.} 2002.

\bibitem{mobihoc.loukas-etal08robotic-wireless-network-emergency}
G.~Loukas, S.~Timotheou, and E.~Gelenbe, ``Robotic {W}ireless {N}etwork
  {C}onnection of {C}ivilians for {E}mergency {R}esponse {O}perations,'' in
  \emph{Proc. of the IEEE International Symposium on Computer and Information
  Sciences (ISCIS)}, Istanbul, Turkey, {Oct.} 2008.

\bibitem{mobihoc.naudts-etal07monitoring-planning-tool}
D.~Naudts, S.~Bouckaert, J.~Bergs, A.~Schouttcet, C.~Blondia, I.~Moerman, and
  P.~Demeester, ``A {W}ireless {M}esh {M}onitoring and {P}lanning {T}ool for
  {E}mergency {S}ervices,'' in \emph{Proc. of the IEEE Workshop on End-to-End
  Monitoring Techniques and Services (E2EMON)}, Munich, Germany, {May} 2007.

\bibitem{mobihoc.souryal-etal07real-time-deployment-range-extension}
M.~R. Souryal, J.~Geissbuehler, L.~E. Miller, and N.~Moayeri, ``Real-{T}ime
  {D}eployment of {M}ultihop {R}elays for {R}ange {E}xtension,'' in \emph{Proc.
  of the ACM International Conference on Mobile Systems, Applications and
  Services (MobiSys)}, San Juan, Puerto Rico, {June} 2007.

\bibitem{mobihoc.souryal-etal09rapidly-deployable-mesh-network-testbed}
M.~R. Souryal, A.~Wapf, and N.~Moayeri, ``{R}apidly-{D}eployable {M}esh
  {N}etwork {T}estbed,'' in \emph{Proc. of the IEEE Conference on Global
  Telecommunications (GLOBECOM)}, Honolulu, Hawai, USA, {Nov.} 2009.

\bibitem{mobihoc.aurisch-tlle09relay-placement-emergency-response}
T.~Aurisch and J.~T\"{o}lle, ``Relay {P}lacement for {A}d-hoc {N}etworks in
  {C}risis and {E}mergency {S}cenarios,'' in \emph{Proc. of the Information
  Systems and Technology Panel (IST) Symposium}.\hskip 1em plus 0.5em minus
  0.4em\relax Bucharest, Romania: NATO Science and Technology Organization, May
  2009.

\bibitem{Breadcrumb}
H.~Liu, J.~Li, Z.~Xie, S.~Lin, K.~Whitehouse, J.~A. Stankovic, and D.~Siu,
  ``{Automatic and Robust Breadcrumb System Deployment for Indoor Firefighter
  Applications},'' in \emph{Proc. of the ACM International Conference on Mobile
  Systems, Applications and Services (MobiSys)}, San Francisco, California,
  USA, {June} 2010.

\bibitem{mondal-etal12impromptu-deployment_NCC}
P.~Mondal, K.~P. Naveen, and A.~Kumar, ``Optimal {D}eployment of {I}mpromptu
  {W}ireless {S}ensor {N}etworks,'' in \emph{Proc. of the IEEE National
  Conference on Communications (NCC)}, Kharagpur, India, {Feb.} 2012.

\bibitem{rachit-kumar12performance-analysis}
R.~Srivastava and A.~Kumar, ``Performance {A}nalysis of {B}eacon-{L}ess {IEEE}
  802.15.4 {M}ulti-{H}op {N}etworks,'' in \emph{Proc. of the IEEE International
  Conference on Communication Systems and Networks (COMSNETS)}, Bangalore,
  India, {Jan.} 2012.

\bibitem{Prasenjit}
P.~Mondal, ``Optimal {D}eployment of {I}mpromptu {W}ireless {S}ensor
  {N}etworks,'' Master's thesis, Indian Institute of Science, Bangalore, 2011.

\bibitem{Bertsekas2}
D.~P. Bertsekas, \emph{Dynamic Programming and Optimal Control, Vol-II, 3rd
  Edition}.\hskip 1em plus 0.5em minus 0.4em\relax Athena Scientific, Belmont,
  Massachusetts, 1995.

\bibitem{Boyd}
S.~Boyd and L.~Vandenberghe, \emph{Convex Optimization}.\hskip 1em plus 0.5em
  minus 0.4em\relax Cambridge University Press, 2004.

\end{thebibliography}

\clearpage
\appendices

\section{Proof of Lemmas in Section~\ref{system_model_section}}
\subsection{Proof of Lemma \ref{conv}}
\label{conv_appendix}
\begin{proof}
Any norm is convex so that the function $g(x,y)\equiv \sqrt{x^2+y^2}$ is convex 
in $(x,y)$. The delay function $d(\cdot)$ is also assumed to be convex and non-decreasing in its argument. Hence by using the composition rule 
\cite[Section~3.2.4]{Boyd}, we conclude that the function $d(x,y)\equiv d(\sqrt{x^2+y^2})$ is convex in $(x,y)\in \mathbb{R}^2$. 
\end{proof}

\subsection{Proof of Lemma \ref{cor1}}
\label{cor1_appendix}
\begin{proof}
It is easier to prove the lemma allowing the arguments $m$ and $n$ take values from the Real line. 
We have,
\begin{eqnarray*}
\Delta_1(x,y)=d(x+\delta,y)-d(x,y)
\end{eqnarray*}
Partially differentiating both sides w.r.t. $x$, we get
\begin{eqnarray*}
\frac{\partial\Delta_1(x,y)}{\partial x}&=& d_x(x+\delta,y)-d_x(x,y)\\
&=&\delta d_{xx}(\zeta,y) \mbox{ where } x<\zeta<x+\delta\label{LMVT1}\\
&>&0, \label{positivity}
\end{eqnarray*}
where the equality follows from the application of Lagrange's Mean Value Theorem to the function $d_x(.,y)$ and the inequality is due to assumption in (\ref{assumption}).
The above proves the fact that $\Delta_1(x,y)$ is non-decreasing in $x$. 

To prove that $\Delta_1(x,y)$ is non-decreasing in $y$, we partially differentiate $\Delta_1(x,y)$ 
w.r.t.\ $y$ and obtain
\begin{eqnarray*}
\frac{\partial\Delta_1(x,y)}{\partial y}&=&d_y(x+\delta,y)-d_y(x,y)\\
&=&\delta d_{xy}(\eta,y) \mbox{ where } x<\eta<x+\delta\label{LMVT2}\\
&>&0, \label{pos2}
\end{eqnarray*} 
where the equality  follows from the application of Lagrange's Mean Value Theorem to the function $d_y(.,y)$ and the inequality is due to assumption in (\ref{assumption}).
This shows that the function $\Delta_1(x,y)$ is non-decreasing in both the coordinates $x$ and $y$. In a similar way it can also be shown that $\Delta_2(x,y)$ is non-decreasing in $x$ and $y$ under the assumption made in (\ref{assumption}). This completes the proof.
\end{proof}

\section{Proof of Theorem~\ref{placement_boundary}}
\label{placement_boundary_appendix}
We begin by defining $H_{\lambda}(m,n):= J_{\lambda}(m,n)-d(m,n)$. Substituting for 
$c_p(m,n)$ and $c_{np}(m,n)$ 
(from (\ref{cp}) and (\ref{cnp}), respectively) into (\ref{defn}) and rearranging we obtain 
(recall the definitions of $\Delta_1(m,n)$ and $\Delta_2(m,n)$ from Section~\ref{system_model_section}):
\begin{eqnarray} \label{placement1}
\lefteqn{\mathcal{P}_\lambda=}\nonumber\\
&&\Big\{(m,n)\!:\!(1\!-\!p)(qH_\lambda(m\!+\!1,n)\!+\!(1\!-\!q)H_\lambda(m,n\!+\!1))\nonumber\\
&&+p(q\Delta_1(m,n)+(1-q)\Delta_2(m,n))\geq \lambda+ \\
&&(1-p)qJ_{\lambda}(1,0)+(1-p)(1-q)J_{\lambda}(0,1)+pd(1)\Big\}. \nonumber
\end{eqnarray}
\begin{lemma} 
\label{H}
For a fixed $\lambda$, $H_\lambda(m,n)$ is non-decreasing in both $m\in\mathbb{Z}_+$ and $n\in 
\mathbb{Z}_+$.
\end{lemma}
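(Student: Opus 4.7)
The plan is value iteration plus induction on the iteration count. Let $T$ denote the Bellman operator for this MDP (so $J_\lambda = TJ_\lambda$), and initialize with $J^{(0)}(m,n):=d(m,n)$, so that $H^{(0)}\equiv 0$ is trivially non-decreasing in each coordinate. Setting $J^{(k+1)}:=TJ^{(k)}$ and $H^{(k)}:=J^{(k)}-d$, the core claim to establish is that $H^{(k)}(m,n)$ is non-decreasing in each coordinate for every $k\ge 0$; the lemma then follows by passing to the pointwise limit.

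For the inductive step I would substitute the formulas (\ref{cp})--(\ref{cnp}) (with $J^{(k)}$ in place of $J_\lambda$) into the Bellman update $J^{(k+1)}=TJ^{(k)}$ and subtract $d(m,n)$ from both sides. The algebra is identical to the one that produced (\ref{placement1}) and yields
\begin{eqnarray*}
\lefteqn{H^{(k+1)}(m,n) = \min\bigl\{A^{(k)},\ (1-p)qH^{(k)}(m+1,n)} \\
&& +(1-p)(1-q)H^{(k)}(m,n+1) \\
&& +q\Delta_1(m,n)+(1-q)\Delta_2(m,n)\bigr\},
\end{eqnarray*}
where $A^{(k)}:=\lambda+(1-p)qJ^{(k)}(1,0)+(1-p)(1-q)J^{(k)}(0,1)+pd(1)$ depends on $k$ but is constant in $(m,n)$. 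By the induction hypothesis, $H^{(k)}(m+1,n)$ and $H^{(k)}(m,n+1)$ are non-decreasing in each coordinate, and Lemma~\ref{cor1} gives the same for $\Delta_1$ and $\Delta_2$. Hence the second argument of the min is non-decreasing in both coordinates while the first is constant, so their minimum $H^{(k+1)}$ is non-decreasing — essentially the same ``min of a constant and a monotone function'' reasoning used to read the threshold structure out of (\ref{placement1}).

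To finish I would invoke pointwise convergence $J^{(k)}\to J_\lambda$. Standard results for positive total-cost MDPs (cf.\ \cite[Ch.~3]{Bertsekas2}) give monotone upward convergence to $J_\lambda$ from the zero initializer; combined with the easy lower bound $d(m,n)\le J_\lambda(m,n)$ (any trajectory from state $(m,n,\mathsf{c})$ accrues a first-hop cost of at least $d(m,n)$, since at least one more step must be taken and $d$ is non-decreasing in each coordinate) and the monotonicity of $T$, this sandwiches $J^{(k)}$ between the zero-initialized iterates and $J_\lambda$, forcing $J^{(k)}\to J_\lambda$. Since a pointwise limit of non-decreasing functions is non-decreasing, $H_\lambda = J_\lambda - d$ inherits the property.

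The main obstacle I anticipate is the convergence step: justifying that value iteration from the non-zero initializer $J^{(0)}=d$ actually reaches $J_\lambda$, which relies on the auxiliary bound $d\le J_\lambda$. The inductive algebra and the monotonicity-preservation argument itself are routine, and mirror the reasoning behind Theorem~\ref{placement_boundary}.
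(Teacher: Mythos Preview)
Your proposal is correct and essentially identical to the paper's proof: the paper also performs induction on a value-iteration/finite-horizon sequence $J_K$ (which, with $J_0=d$, coincides with your $J^{(k)}$), obtains the same ``$\min$ of a constant and a monotone expression'' recursion for $H_K$, and then passes to the limit. Your treatment of the convergence step (sandwiching $T^k d$ between $T^k 0$ and $J_\lambda$ via $d\le J_\lambda$) is in fact more careful than the paper's, which simply asserts the $K\to\infty$ limit.
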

\begin{proof}
Consider a sequential relay placement problem where we have $K$ steps to go. The corridor length is the minimum of $K$ and of a geometric random variable with parameter $p$. The problem be formulated as a finite horizon MDP with horizon length $K$. For any given $(m,n)$, $J_{K}(m,n)$, $K\geq 2$ is obtained recursively: 
\begin{eqnarray*}
\lefteqn{J_{K}(m,n) = \min \{c_p(m,n),c_{np}(m,n)\}}\\
&=&\min\{\lambda+d(m,n)+(1-p)qJ_{K-1}(1,0)+pqd(1)+ \\
&& (1-p)(1-q)J_{K-1}(0,1)+p(1-q)d(1),\\
&& (1-p)qJ_{K-1}(m+1,n)+pqd(m+1,n)+\\
&& (1\!-\!p)(1\!-\!q)J_{K\!-\!1}(m,n\!+\!1)\!+\!p(1\!-\!q)d(m,n\!+\!1)\}.
\end{eqnarray*}
For $K=1$, since a sensor must be placed at the next step, we have 
$J_{1}(m,n)=\min\{\lambda+d(m,n)+d(1),qd(m+1,n)+(1-q)d(m,n+1)\}.$
Therefore, 
\begin{eqnarray*}
\lefteqn{H_{1}(m,n):=J_{1}(m,n)-d(m,n)}\\
&=&\min \{\lambda+d(1),q\Delta_1(m,n)+(1-q)\Delta_2(m,n)\}.
\end{eqnarray*}
From Lemma ~\ref{cor1}, it follows that $H_1(m,n)$ is non-decreasing in both $m$ and $n$.
Now we make the induction hypothesis and assume that $H_{K-1}(m,n)$ is non-decreasing in $m$ and $n$. We have:
\begin{eqnarray*}
\lefteqn{H_{K}(m,n)=J_{K}(m,n)-d(m,n)}\\
&=&\min\{\lambda+(1-p)qJ_{K-1}(1,0)+pqd(1)+\\
&&(1-p)(1-q)J_{K-1}(0,1)+p(1-q)d(1),(1-p)\\
&&(qH_{K-1}(m+1,n)+(1-q)H_{K-1}(m,n+1))+\\
&&q\Delta_1(m,n)+(1-q)\Delta_2(m,n)\}.
\end{eqnarray*}
By the induction hypothesis and Lemma ~\ref{cor1}, it follows that $H_{K}(m,n)$ is non-decreasing in both $m$ and $n$. The proof is complete by taking the limit as $ K\rightarrow \infty$. 
\end{proof}

We are now ready to prove Theorem~\ref{placement_boundary}.
\begin{proof}[Proof of Theorem~\ref{placement_boundary}]
Referring to (\ref{placement1}), utilizing Lemma~\ref{H} and the Lemma ~\ref{cor1}, it follows that for a fixed $n\in \mathbb{Z}_+$, the LHS (Left Hand Side) of (\ref{placement1}), describing the placement set $\mathcal{P}_\lambda$ is an increasing function of $m$, while the RHS (Right Hand Side) is a \emph{finite} constant. Also, because of the assumed properties of the function $d(.)$, $\Delta_1(m,n)\rightarrow \infty$ as $m\rightarrow \infty$, for any 
fixed $n$. Hence it follows that there exists an $m^*(n)\in\mathbb{Z}_+$ such that 
$(m,n)\in \mathcal{P}_\lambda\hspace{10pt} \forall m\geq m^*(n)$. Hence we may write 
$P_\lambda=\bigcup_{n\in\mathbb{Z}_+}\{(m,n)|m\geq m^*(n)\}.$
The second characterization follows by similar arguments. 
\end{proof}

\section{Proof of Theorem \ref{optimality_OSLA_2}}
\label{optimality_OSLA_2_appendix}

We require the following lemmas to prove Theorem \ref{optimality_OSLA_2}.

\begin{lemma}
$\mathcal{P}_\lambda \subset \overline{\mathcal{P}}_{\lambda}$
\end{lemma}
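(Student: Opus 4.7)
The plan is to show the two ingredients $c_p \equiv c_1$ (the cost of placing a relay is exactly the one-step-look-ahead ``stop'' cost) and $c_{np} \le c_2$ (the optimal continuation cost is bounded above by the OSLA one-step continuation cost), and then chain them together with the defining inequality of $\mathcal{P}_\lambda$.

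First I would establish a useful identity for $J_\lambda(0,0)$. At state $(0,0)$ the control centre is already in place, so placing a relay there costs the extra $\lambda+d(0,0)>0$ (using \textbf{C1}) over not placing, hence $(0,0)\notin\mathcal{P}_\lambda$ and Bellman gives $J_\lambda(0,0)=c_{np}(0,0)$. Writing $c_{np}(0,0)$ out explicitly from (\ref{cnp}) and using $d(1,0)=d(0,1)=d(1)$ yields
\begin{equation*}
J_\lambda(0,0) \;=\; (1-p)q J_\lambda(1,0) + (1-p)(1-q) J_\lambda(0,1) + p\,d(1).
\end{equation*}
Comparing with (\ref{cp}) this immediately gives $c_p(m,n)=\lambda+d(m,n)+J_\lambda(0,0)=c_1(m,n)$ for every $(m,n)$.

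Next I would bound $c_{np}$ by $c_2$. By Bellman optimality, for any state $(m',n')$ one has $J_\lambda(m',n')\le c_p(m',n')=\lambda+d(m',n')+J_\lambda(0,0)$. Applying this with $(m',n')=(m+1,n)$ and $(m',n')=(m,n+1)$ inside the expression (\ref{cnp}) for $c_{np}(m,n)$ and collecting terms (the $d(m+1,n)$ terms combine to coefficient $q$, the $d(m,n+1)$ terms to coefficient $1-q$, and the $J_\lambda(0,0)+\lambda$ terms to coefficient $1-p$) produces exactly the expression for $c_2(m,n)$ in the excerpt. Hence $c_{np}(m,n)\le c_2(m,n)$ for all $(m,n)\in\mathbb{Z}_+^2$.

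Chaining these together finishes the lemma: if $(m,n)\in\mathcal{P}_\lambda$, then by definition $c_p(m,n)\le c_{np}(m,n)$, so
\begin{equation*}
c_1(m,n)\;=\;c_p(m,n)\;\le\;c_{np}(m,n)\;\le\;c_2(m,n),
\end{equation*}
which places $(m,n)$ in $\overline{\mathcal{P}}_\lambda$. The only mildly delicate point is the identification $c_p\equiv c_1$, which hinges on recognising that $J_\lambda(0,0)$ satisfies the continuation recursion above; everything else is a one-line application of Bellman's inequality.
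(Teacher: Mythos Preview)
Your proof is correct, and it takes a slightly different---and arguably cleaner---route than the paper's.

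The paper's argument first invokes the threshold structure of $\mathcal{P}_\lambda$ (Theorem~\ref{placement_boundary}) to conclude that if $(m,n)\in\mathcal{P}_\lambda$ then also $(m+1,n),(m,n+1)\in\mathcal{P}_\lambda$; this gives the \emph{equalities} $J_\lambda(m+1,n)=\lambda+d(m+1,n)+J_\lambda(0,0)$ and $J_\lambda(m,n+1)=\lambda+d(m,n+1)+J_\lambda(0,0)$, which are then substituted into the defining inequality $c_p(m,n)\le c_{np}(m,n)$ to produce the OSLA condition. You instead use only the universal Bellman \emph{inequality} $J_\lambda(m',n')\le c_p(m',n')$ at those two neighbouring points, which is enough to bound $c_{np}(m,n)\le c_2(m,n)$ and complete the chain $c_1=c_p\le c_{np}\le c_2$. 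Thus your proof does not rely on Theorem~\ref{placement_boundary} at all, making the inclusion $\mathcal{P}_\lambda\subset\overline{\mathcal{P}}_\lambda$ logically independent of the threshold characterisation. Your argument for $J_\lambda(0,0)=c_{np}(0,0)$ (observing that at $(0,0)$ the two actions lead to identical next-state distributions, so placing costs exactly $\lambda+d(0,0)>0$ more) is also a bit crisper than the paper's infinite-cost argument, though both are valid.
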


\begin{proof}
Suppose that $(m,n)\in \mathcal{P}_\lambda$. 
Then from (\ref{set11}) $(m+1,n)\in \mathcal{P}_\lambda$ and from 
(\ref{set22}), $(m,n+1)\in \mathcal{P}_\lambda$. 
Since $(m,n)\in \mathcal{P}_\lambda$, we have from  (\ref{cp}), (\ref{cnp}) and (\ref{defn}) that 
\begin{eqnarray} 
\label{rs}
\lefteqn{\lambda\!+\!d(m,n)\!+\!(1\!-\!p)qJ_{\lambda}(1,0)\!+\!pqd(1)\!+\!(1\!-\!p)(1\!-\!q)\times} \nonumber\\
&&J_{\lambda}(0,1)\!+\!p(1\!-\!q)d(1)\!\leq\!
(1\!-\!p)qJ_{\lambda}(m\!+\!1,n)\!+pq\!\times\nonumber\\
&&d(m\!\!+\!\!1,n)\!\!+\!\!(1\!\!-\!\!p)(1\!\!-\!\!q)J_{\lambda}(m,n\!\!+\!\!1)\!\!+\!\!p(1\!\!-\!\!q)d(m,n\!\!+\!\!1).\nonumber\\
\end{eqnarray}
Also we may argue that at the state $(0,0)$, it is optimal not to place. Indeed, if it had been optimal to place at the state $(0,0)$, at the next step, we return to the same state, viz., $(0,0)$. Now, because of the stationarity of the optimal policy, we would keep placing relays at the same point, and since ``relay-cost'' $\lambda>0$ and $d(0,0)>0$, the expected cost for this policy would be $\infty$. Hence,
\begin{eqnarray}\label{J00} 
\lefteqn{J_{\lambda}(0,0)=(1-p)qJ_{\lambda}(1,0)+pqd(1)+} \nonumber \\
&&(1-p)(1-q)J_{\lambda}(0,1)+p(1-q)d(1).
\end{eqnarray} 
Since $(m+1,n)\in \mathcal{P}_\lambda$ and $(m,n+1)\in \mathcal{P}_\lambda$, we have (noticing that it is optimal to place at these points and utilizing (\ref{cp}) and (\ref{J00})),  
\begin{eqnarray} 
J_{\lambda}(m+1,n)&=&\lambda+d(m+1,n)+J_{\lambda}(0,0) \label{rs1}\\
J_{\lambda}(m,n+1)&=&\lambda+d(m,n+1)+J_{\lambda}(0,0)\label{rs2}.
\end{eqnarray}
Now, using (\ref{J00}), (\ref{rs1}) and (\ref{rs2}) in (\ref{rs}), we obtain:
\begin{eqnarray}
p(\lambda\!+\!J_{\lambda}(0,0))&\!\leq\!& q\Delta_1(m,n)\!+\!(1\!-\!q)\Delta_2(m,n).
\end{eqnarray}
This proves that 
$(m,n)\in \bar{\mathcal{P}}_\lambda$ and hence
$\mathcal{P}_\lambda\subset \overline{\mathcal{P}}_{\lambda}$
\end{proof}

Using the above Lemma and from (\ref{set11}), (\ref{set22}),
(\ref{OSLA2_1}), (\ref{OSLA2_2}) we can conclude that:
\begin{eqnarray} 
&n^*(m)\geq \overline{n}(m) & \forall m\in \mathbb{Z}_+ \label{ineq1}\\
&m^*(n)\geq \overline{m}(n) & \forall n\in \mathbb{Z}_+ \label{ineq2}.
\end{eqnarray}

\begin{lemma}
\label{lem33}
If $(m,n)\in\overline{\mathcal{P}}_\lambda$ is such that $(m,n+1)\in\mathcal{P}_\lambda$ and 
$(m+1,n)\in \mathcal{P}_\lambda$, then $(m,n)\in\mathcal{P}_\lambda$
\end{lemma}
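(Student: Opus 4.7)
The plan is to reduce the desired containment $(m,n)\in\mathcal{P}_\lambda$ to the OSLA inequality that already holds by hypothesis. First I would exploit the assumption $(m+1,n),(m,n+1)\in\mathcal{P}_\lambda$: since these are placement points, by exactly the reasoning that yielded equations~(\ref{rs1}) and~(\ref{rs2}) in the proof of the previous lemma, we have
\begin{eqnarray*}
J_\lambda(m+1,n) &=& \lambda + d(m+1,n) + J_\lambda(0,0),\\
J_\lambda(m,n+1) &=& \lambda + d(m,n+1) + J_\lambda(0,0).
\end{eqnarray*}
Substituting these into the definition~(\ref{cnp}) of $c_{np}(m,n)$ and collecting terms (the pairs $(1-p)q + pq = q$ and $(1-p)(1-q) + p(1-q) = (1-q)$ combine the $J_\lambda$ and $d$ contributions neatly) gives the compact form
\begin{eqnarray*}
c_{np}(m,n) = (1-p)\bigl(\lambda + J_\lambda(0,0)\bigr) + q\,d(m+1,n) + (1-q)\,d(m,n+1).
\end{eqnarray*}

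Next, I would simplify $c_p(m,n)$ from~(\ref{cp}) using the identity~(\ref{J00}) for $J_\lambda(0,0)$ (which is available because it is optimal not to place at $(0,0)$, as argued in the preceding lemma), obtaining
\begin{eqnarray*}
c_p(m,n) = \lambda + d(m,n) + J_\lambda(0,0).
\end{eqnarray*}
The condition $(m,n)\in\mathcal{P}_\lambda$ is $c_p(m,n) \le c_{np}(m,n)$, and after cancelling $\lambda + J_\lambda(0,0)$ on both sides and rearranging, this is equivalent to
\begin{eqnarray*}
p\bigl(\lambda + J_\lambda(0,0)\bigr) \le q\,\Delta_1(m,n) + (1-q)\,\Delta_2(m,n) = \Delta_q(m,n),
\end{eqnarray*}
which is precisely the defining inequality~(\ref{OSLA_2}) of $\overline{\mathcal{P}}_\lambda$. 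Since $(m,n)\in\overline{\mathcal{P}}_\lambda$ by hypothesis, this inequality holds and the proof is complete.

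In short, the proof is essentially an algebraic collapse: the only conceptual ingredient is recognising that the hypotheses $(m+1,n),(m,n+1)\in\mathcal{P}_\lambda$ let us replace the cost-to-go at those neighbouring states by their placement expressions, after which the gap between $c_{np}$ and $c_p$ telescopes into $\Delta_q(m,n) - p(\lambda+J_\lambda(0,0))$. I do not anticipate any genuine obstacle; the only thing to be careful about is invoking~(\ref{J00}), which relies on the fact that not placing is optimal at the origin, and making sure the probability bookkeeping $(1-p)q + pq = q$, etc., is applied consistently when simplifying $c_{np}(m,n)$.
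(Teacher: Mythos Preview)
Your proof is correct and follows essentially the same approach as the paper's: both use $(m+1,n),(m,n+1)\in\mathcal{P}_\lambda$ to invoke the identities~(\ref{rs1})--(\ref{rs2}), use~(\ref{J00}) to simplify the placement cost, and reduce the condition $(m,n)\in\mathcal{P}_\lambda$ to the OSLA inequality~(\ref{OSLA_2}), which holds by hypothesis. The only cosmetic difference is direction---the paper starts from the OSLA inequality and substitutes to recover~(\ref{rs}), whereas you start from $c_p\le c_{np}$ and simplify down to the OSLA inequality---but the algebra is identical.
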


\begin{proof}
Since $(m,n)\in \bar{\mathcal{P}}_{\lambda}$, we have from (\ref{OSLA_2}),
\begin{eqnarray} \label{ineq}
p(\lambda+J_{\lambda}(0,0))\leq q\Delta_1(m,n)+(1-q)\Delta_2(m,n).
\end{eqnarray}
Now $(m,n+1)\in \mathcal{P}_\lambda$, and $(m+1,n)\in \mathcal{P}_\lambda$, hence we have from (\ref{rs1}) and (\ref{rs2}):
\begin{eqnarray*}
J_{\lambda}(m+1,n)&=&\lambda+d(m+1,n)+J_{\lambda}(0,0)\\
J_{\lambda}(m,n+1)&=&\lambda+d(m,n+1)+J_{\lambda}(0,0).
\end{eqnarray*}
The expression (\ref{J00}) is always true. Now using (\ref{J00}) and the above two equations in inequality (\ref{ineq}), we obtain (\ref{rs}), which proves that $(m,n)\in \mathcal{P}_\lambda$.
\end{proof}

\begin{lemma} \label{lemma:mplusk}
If $(m,n)\in \mathcal{P}_\lambda$ (resp. $\overline{\mathcal{P}}_\lambda$), then $(m+k,n)\in \mathcal{P}_\lambda$ (resp. $\overline{\mathcal{P}}_\lambda$) and $(m,n+k)\in \mathcal{P}_\lambda$ (resp. $\overline{\mathcal{P}}_\lambda$) for any $k\in \mathbb{Z}_+$.
\end{lemma}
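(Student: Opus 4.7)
The plan is to read off this lemma directly from the two threshold characterizations already established, so no new induction or manipulation of the Bellman equation is needed. Theorem~\ref{placement_boundary} asserts that the optimal placement set admits the two simultaneous representations
\begin{eqnarray*}
\mathcal{P}_\lambda=\bigcup_{n\in\mathbb{Z}_+}\{(m,n): m\geq m^*(n)\}=\bigcup_{m\in\mathbb{Z}_+}\{(m,n): n\geq n^*(m)\},
\end{eqnarray*}
and the analogous theorem for the OSLA set provides mappings $\bar m(\cdot)$ and $\bar n(\cdot)$ giving the same kind of double description for $\overline{\mathcal{P}}_\lambda$.

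Suppose first that $(m,n)\in\mathcal{P}_\lambda$. Applying the first representation yields $m\geq m^*(n)$, so for any $k\in\mathbb{Z}_+$, $m+k\geq m^*(n)$, which places $(m+k,n)\in\mathcal{P}_\lambda$. Applying the second representation yields $n\geq n^*(m)$, hence $n+k\geq n^*(m)$, so $(m,n+k)\in\mathcal{P}_\lambda$. This takes care of the assertion for $\mathcal{P}_\lambda$.

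For the OSLA case, I would repeat the same argument verbatim with $m^*,n^*$ replaced by $\bar m,\bar n$, using the threshold theorem for $\overline{\mathcal{P}}_\lambda$. There is no real obstacle here: once both ``East-threshold'' and ``North-threshold'' descriptions of each set are in hand, the closure under East/North shifts is immediate. The only point worth flagging is that one must invoke \emph{both} representations in Theorem~\ref{placement_boundary} (not just one), since the single representation $m\geq m^*(n)$ by itself does not obviously yield closure under vertical shifts without further monotonicity of $n\mapsto m^*(n)$; using the two representations in parallel sidesteps this entirely.
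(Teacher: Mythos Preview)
Your argument is correct and essentially equivalent to the paper's: the paper appeals directly to the monotonicity of the defining inequalities (the LHS of (\ref{placement1}) and the RHS of (\ref{OSLA_2}) are increasing in both coordinates while the other side is constant), whereas you invoke the threshold representations of Theorem~\ref{placement_boundary} and its OSLA analogue, which are precisely the packaged consequences of that same monotonicity. Either route yields the lemma immediately, and there is no circularity since both threshold theorems are established before this lemma is needed.
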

\begin{proof}
The proof follows easily because the LHS of (\ref{placement1}) is increasing in both $m$ and $n$ while the RHS is a constant. Similarly, the RHS of (\ref{OSLA_2}) is increasing in both $m$ and $n$ while the LHS is a constant. 
\end{proof}

We can now prove the main theorem. 

\begin{proof}[Proof of Theorem \ref{optimality_OSLA_2}]
We need to show that inequalities in (\ref{ineq1}) and (\ref{ineq2}) are equalities. For any $m\in \mathbb{Z}_+$, suppose that in (\ref{ineq1}) $n^*(m)> n^*(m)-1\geq \bar{n}(m)$. Then we have the following inclusions: 
\begin{eqnarray}
(m,n^*(m)) &\in& \mathcal{P}_\lambda \nonumber \\
(m,n^*(m)-1) &\in& \overline{\mathcal{P}}_\lambda \nonumber \\
(m,n^*(m)-1) &\notin& \mathcal{P}_\lambda. \label{eq55}
\end{eqnarray}
Let us index the collection of lattice-points $(m+i,n^*(m)-\!1)$ by $N_i, i\in \mathbb{Z}_+$. Since $(m,n^*(m)-1) \in \overline{\mathcal{P}}_\lambda$, from Lemma~\ref{lemma:mplusk}, it follows that $N_i\in \overline{\mathcal{P}}_\lambda$. From (\ref{eq55}), $N_0\notin \mathcal{P}_\lambda$.

Then, the optimal policy being a threshold policy, we know that there exists a finite $k>0$, s.t. $N_k\in \mathcal{P}_\lambda$, i.e.,
\begin{equation} \label{eq:Nk}
(m+k,n^*(m)-1) \in \mathcal{P}_\lambda.
\end{equation}
Again from Lemma~\ref{lemma:mplusk}, since $(m,n^*(m))\in \mathcal{P}_\lambda$, we have for any $k>0$:
\begin{eqnarray} \label{eq:Nk2}
(m+k-1,n^*(m)) &\in& \mathcal{P}_\lambda.
\end{eqnarray}
Now we see that for the point $N_{k-1}$, the conditions of Lemma~\ref{lem33} are satisfied. Hence $N_{k-1}\in \mathcal{P}_\lambda$. If $k=1$, we already have a contradiction since $N_0\notin \mathcal{P}_\lambda$. Otherwise for $k>1$, using Lemma~\ref{lemma:mplusk} and $N_{k-1}\in \mathcal{P}_\lambda$, we can show that 
$N_{k-2}$ is subject to the conditions of Lemma~\ref{lem33} implying that $N_{k-2}\in \mathcal{P}_\lambda$. By iteration, we finally obtain that $N_0\in \mathcal{P}_\lambda$, which contradicts (\ref{eq55}) and proves the result. 
\end{proof}

\section{Proof of Lemma~\ref{FPI_NLOS}}
\label{FPI_NLOS_appendix}

We start by showing the following lemma. 
\begin{lemma} \label{g_h_Eqn}
For \emph{any} placement set $\mathcal{P}(h)$ of the form in (\ref{NLOS_Placement}), we have:
\begin{eqnarray} \label{mainEqn_2D}
\sum_{(m,n)\in\mathcal{P}^c(h)}r(m,n)\bigg(\Delta_q(m,n)-p(\lambda+g(h)) \bigg)&& \nonumber \\
+d(0,0)+\lambda=0, &&
\end{eqnarray}
where $r(m,n)=(1-p)^{m+n}\binom{m+n}{m}q^m(1-q)^n$.
\end{lemma}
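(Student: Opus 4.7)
The plan is to connect the claimed identity to the renewal equation defining $g(h)$ via a stopping-time (telescoping) argument, exploiting the threshold structure of $\mathcal{P}(h)$.

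First I would note that since $\Delta_q(m,n)$ is non-decreasing in both coordinates (Lemma~\ref{cor1}), the complementary set $\mathcal{P}^c(h)$ has a staircase structure: whenever $(m,n)\in\mathcal{P}^c(h)$ with $m\geq 1$ (resp.\ $n\geq 1$), the point $(m-1,n)$ (resp.\ $(m,n-1)$) also lies in $\mathcal{P}^c(h)$. As a consequence, $r(m,n)$ is exactly the probability that the controlled walk ever visits $(m,n)$ in state $\mathsf{c}$ before being absorbed by $\mathcal{B}(h)$ or by path termination; there is no earlier boundary crossing to worry about.

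Next I would view $\Delta_q(m,n)$ as the conditional expected one-step increment of $d(M_k,N_k)$ given $(M_k,N_k)=(m,n)$ and that another step is taken---since the step direction is independent of whether the path terminates, this is just $q\Delta_1(m,n)+(1-q)\Delta_2(m,n)$, regardless of the $\mathsf{e}/\mathsf{c}$ outcome. Letting $T$ be the first step at which the path has terminated or the walk has entered $\mathcal{B}(h)$, telescoping $d(M_T,N_T)-d(0,0)$ and using Fubini with the staircase property gives
\begin{eqnarray*}
\mathbb{E}[d(M_T,N_T)] = d(0,0) + \sum_{(m,n)\in\mathcal{P}^c(h)} r(m,n)\,\Delta_q(m,n).
\end{eqnarray*}
On the other hand, partitioning by the mode of exit and using the rearranged renewal equation from (\ref{NLOS_gh}) expresses the same expectation as $(g(h)+\lambda)(1-X)-\lambda$, where $X := \sum_{(m,n)\in\mathcal{B}(h)}\mathbb{P}((m,n),\mathsf{c})$. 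Equating the two yields
\begin{eqnarray*}
\sum_{(m,n)\in\mathcal{P}^c(h)} r(m,n)\,\Delta_q(m,n) = (g(h)+\lambda)(1-X) - d(0,0) - \lambda.
\end{eqnarray*}

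The final ingredient is the identity $1-X = p\sum_{(m,n)\in\mathcal{P}^c(h)} r(m,n)$, at which the algebra collapses. I would obtain it from a ``step-before-termination'' decomposition: the event that the path terminates before ever reaching $\mathcal{B}(h)$ in state $\mathsf{c}$ is the disjoint union, over $(m,n)\in\mathcal{P}^c(h)$, of the events ``visit $(m,n)$ in state $\mathsf{c}$ and terminate on the next step,'' each of probability $p\cdot r(m,n)$; its total probability equals $1-X$ by complementation. Substituting into the previous display produces exactly (\ref{mainEqn_2D}). The main obstacle I anticipate is cleanly justifying the interpretation of $r(m,n)$ as the true visit probability under the controlled walk (and the consequent identity $1-X = pB$); this rests entirely on the staircase property above. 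Without that structural fact, one would have to fall back on the more tedious case-by-case reindexing of sums over the boundary components $\mathcal{B}^w(h)$, $\mathcal{B}^s(h)$, $\mathcal{B}^{null}(h)$ defined in Section~\ref{calculation_cost_section}, using the explicit formulas for $\mathbb{P}((m,n),\mathsf{e})$ and $\mathbb{P}((m,n),\mathsf{c})$ given there.
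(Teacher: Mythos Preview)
Your proposal is correct and follows essentially the same route as the paper: the paper phrases your telescoping step as an edge sum $\sum_{e\in E}\delta(e)t(e)$ (with $t(e)=q\,r(m,n)$ or $(1-q)\,r(m,n)$) and arrives at the same two intermediate identities---your expression for $\mathbb{E}[d(M_T,N_T)]$ is the paper's (\ref{2D_sum_part1}), and your $1-X=p\sum_{\mathcal{P}^c(h)} r(m,n)$ is the paper's (\ref{2D_sum_part2})---before substituting into (\ref{NLOS_gh}). Both arguments rest on the staircase shape of $\mathcal{P}^c(h)$, which is exactly what makes $r(m,n)$ the true visit probability, so the ``main obstacle'' you flagged is handled in the paper in the same way you propose.
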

\begin{proof}
We first introduce some notations and definitions.

Let us define a path $\sigma$ as a possible realization of the corridor, starting from $(0,0)$ and let $\mathbb{P}(\sigma)$ be the probability of such a path. The set of all paths is denoted by $\Sigma$. Let $\Sigma_{mn}$ denote the set of all paths that end at $(m,n)\in \mathcal{P}^c(h)\cup \mathcal{B}(h)$ and $\Sigma_{mn}(c)$ the set of all paths that hit $(m,n)\in \mathcal{B}(h)$ and continue. 

Let us denote the set of edges whose both end vertices belong to the set $\mathcal{P}^c(h)\cup \mathcal{B}(h)$ by $E$. A path $\sigma$ is completely characterized by its edge set ${E}_\sigma$.

The reaching probability, $r(m,n)$, of a point $(m,n)$ is defined as the probability that a random path $\sigma$ \emph{reaches} the point $(m,n)$ and continues for at least one step. Hence, $r(m,n)=(1-p)^{m+n}\binom{m+n}{m}q^m(1-q)^n$.

The incremental cost function  $\delta : E\longrightarrow \mathbb{R}_+$ is defined as follows:
\begin{eqnarray}
\delta(e) = \begin{cases} d(m+1,n)-d(m,n)=\Delta_1(m,n) & \\ \hspace{2cm}\mbox{if } e=\{(m,n),(m+1,n)\} \\
d(m,n+1)-d(m,n)=\Delta_2(m,n) & \\ \hspace{2cm}\mbox{if } e=\{(m,n),(m,n+1)\}. \end{cases} 
\end{eqnarray}

For $(m,n) \in \sigma$, the incremental cost function allows us to write:
\begin{eqnarray}
 d(m,n)=\sum_{e\in E_\sigma \cap E}\delta(e) + d(0,0).
\end{eqnarray}
Now consider 
\begin{eqnarray}
 &&\hspace{-1cm}\sum_{\mathcal{P}^c(h)\cup \mathcal{B}(h)}\mathbb{P}((m,n),
\mathsf{e})d(m,n)+ \sum_{\mathcal{B}(h)}\mathbb{P}((m,n),\mathsf{c})d(m,n)\nonumber \\
&=& \sum_{\mathcal{P}^c(h)\cup \mathcal{B}(h)}\sum_{\sigma\in\Sigma_{mn}}\mathbb{P}(\sigma)\bigg(\sum_{e\in E_\sigma}\delta(e)+d(0,0)\bigg) + \nonumber \\
&&\sum_{\mathcal{B}(h)}\sum_{\sigma \in\Sigma_{mn}(c)}\mathbb{P}(\sigma)
\bigg(\sum_{e\in E_\sigma \cap E}\delta(e)+d(0,0)\bigg)\nonumber \\
&=&\sum_{e\in E} \delta(e) \sum_{\sigma \in \Sigma : e\in E_\sigma}\mathbb{P}(\sigma) + d(0,0)\nonumber \\
&=&\sum_{e\in E} \delta(e) t(e) + d(0,0), \label{2D_sum}
\end{eqnarray}
where by $t(e)$ we denote the probability that a random path goes through the edge $e\in E$.

Now if $e$ is horizontal, i.e., $e=\{(m,n),(m+1,n)\}, (m,n)\in\mathcal{P}^c(h)$, we have $t(e)=q r(m,n)$ and $\delta(e)=\Delta_1(m,n)$. Similarly if $e$ is vertical, i.e., $e=\{(m,n),(m,n+1)\}, (m,n)\in\mathcal{P}^c(h)$, we have $t(e)=(1-q) r(m,n)$ and $\delta(e)=\Delta_2(m,n)$. Using these relations, we may rewrite (\ref{2D_sum}) as follows:
\begin{eqnarray}
&& \hspace{-1cm}\sum_{\mathcal{P}^c(h)}r(m,n)\bigg(q\Delta_1(m,n)+(1-q)\Delta_2(m,n)\bigg)+d(0,0) \nonumber\\
&=&\sum_{\mathcal{P}^c(h)}r(m,n)\Delta_q(m,n)+d(0,0).\label{2D_sum_part1}
\end{eqnarray}

Now consider the probability $\sum_{(m,n)\in \mathcal{B}(h)}\mathbb{P}((m,n),\mathsf{c})$. It is the probability that a random path continues beyond the boundary $\mathcal{B}(h)$. Hence we may write
\begin{eqnarray}
\sum_{\mathcal{B}(h)}\mathbb{P}((m,n),\mathsf{c})&=&1-\sum_{\mathcal{P}^c(h)\cup\mathcal{B}(h)}\mathbb{P}((m,n),\mathsf{e})\nonumber\\
&=&1-\sum_{\mathcal{P}^c(h)}r(m,n)p. \label{2D_sum_part2}
\end{eqnarray}
Using (\ref{2D_sum_part1}) and (\ref{2D_sum_part2}) in (\ref{NLOS_gh}) and simplifying, we obtain the result. 
\end{proof}

\begin{proof}[Proof of Lemma~\ref{FPI_NLOS}]

We recall the definition of $\mathcal{P}^c(h)$. 
\begin{eqnarray} \label{defforPc}
\mathcal{P}^c(h)=\{(m,n)\in\mathbb{Z}_+^2: p(\lambda+h)>\Delta_q(m,n)\}.
\end{eqnarray}
Since $h>g^*$, we immediately conclude that ${\mathcal{P}_{\lambda}}^c\subset \mathcal{P}^c(h)$. From (\ref{mainEqn_2D}) in Lemma~\ref{g_h_Eqn}, we may write for the optimal placement set $\mathcal{P}_{\lambda}$:
\begin{eqnarray}\label{EqnP*}
 \sum_{{\mathcal{P}_{\lambda}}^c}r(m,n)\Delta_q(m,n)&=& p(\lambda+g^*)\sum_{{\mathcal{P}_{\lambda}}^c}r(m,n)\nonumber \\
 &&-(d(0,0)+\lambda).
\end{eqnarray}
We may similarly write for the placement set $\mathcal{P}(h)$:
\begin{eqnarray} \label{EqnPh}
 \sum_{{\mathcal{P}}^c(h)}r(m,n)\Delta_q(m,n)&=&p(\lambda+g(h))\sum_{{\mathcal{P}}^c(h)}r(m,n)\nonumber \\
 &&- (d(0,0)+\lambda).
\end{eqnarray}
Now, since ${\mathcal{P}_{\lambda}}^c\subset \mathcal{P}^c(h)$, we may expand the LHS of (\ref{EqnPh}) as follows:
\begin{eqnarray}
\lefteqn{\sum_{{\mathcal{P}}^c(h)}r(m,n)\Delta_q(m,n)}\nonumber \\
&=& \sum_{{\mathcal{P}}_{\lambda}^c}r(m,n)\Delta_q(m,n)+\!\!\!\sum_{{\mathcal{P}}^c(h)\backslash {\mathcal{P}}_{\lambda}^c}r(m,n)\Delta_q(m,n) \nonumber \\
&<& \sum_{{\mathcal{P}}_{\lambda}^c}r(m,n)\Delta_q(m,n)+p(\lambda+h)\!\!\!\sum_{{\mathcal{P}}^c(h)\backslash {\mathcal{P}}_{\lambda}^c}r(m,n) \nonumber \\
&=& p(\lambda+g^*)\sum_{{\mathcal{P}}_{\lambda}^c}r(m,n) - (d(0,0)+\lambda) \nonumber \\
&& +\;p(\lambda+h)\sum_{{\mathcal{P}}^c(h)\backslash {\mathcal{P}}_{\lambda}^c}r(m,n), \label{subforP*}
\end{eqnarray}
where, for the inequality, we used (\ref{defforPc}) and for (\ref{subforP*}), we have substituted the value for the quantity from (\ref{EqnP*}).
We may alternatively write the RHS of (\ref{EqnPh}) as:
\begin{eqnarray}
\lefteqn{p(\lambda+g(h))\sum_{{\mathcal{P}}^c(h)}r(m,n) -(d(0,0)+\lambda)}\nonumber \\
&=&p(\lambda+g(h))\bigg(\sum_{{\mathcal{P}_{\lambda}}^c}r(m,n)+ \sum_{{\mathcal{P}}^c(h)\backslash{\mathcal{P}_{\lambda}}^c}r(m,n)\bigg)\nonumber\\ 
&&-\;(d(0,0)+\lambda).\label{cmpr2}
\end{eqnarray}
Now comparing (\ref{subforP*}) and (\ref{cmpr2}) and rearranging, we may write:
\begin{eqnarray}\label{ineq_h_gh}
 p(g(h)\!-\!g^*)\sum_{{\mathcal{P}_{\lambda}}^c}r(m,n)<p(h\!-\!g(h))\!\!\!
\sum_{{\mathcal{P}^c}(h)\backslash{{\mathcal{P}_{\lambda}}^c}}r(m,n)\!
\end{eqnarray}
Now $\sum_{{\mathcal{P}^c}(h)\backslash{{\mathcal{P}_{\lambda}}^c}}r(m,n)=0$ if and only if ${\mathcal{P}^c}(h)\backslash{{\mathcal{P}_{\lambda}}^c}= \varnothing $, i.e., ${\mathcal{P}}(h)={\mathcal{P}_{\lambda}}$. In this case we get $g(h)=g^*<h$. 
On the other hand, if $\sum_{{\mathcal{P}^c}(h)\backslash{{\mathcal{P}_{\lambda}}^c}}r(m,n)>0$, since $g^*\leq g(h)$, from the inequality (\ref{ineq_h_gh}), we conclude that $h>g(h)$.
\end{proof}

\end{document}